\documentclass{article}

\input{style.sty}

\title{Well-conditioned iterative methods for large open quantum systems}

\author[1]{Gaspard Beugnot\thanks{gaspard.beugnot@alice-bob.com}}
\author[1]{Paul Gregory\thanks{paul.gregory.w@gmail.com}}
\author[2]{Rémi Robin\thanks{remi.robin@minesparis.psl.eu}}
\author[2]{Antoine Tilloy\thanks{antoine.tilloy@minesparis.psl.eu}}

\affil[1]{Alice $\&$ Bob, Paris, France}
\affil[2]{Laboratoire de Physique de l’\'Ecole Normale Supérieure, Mines Paris, Inria, CNRS, ENS-PSL, Sorbonne Université, PSL Research University, Paris, France}

\date{}

\begin{document}

\maketitle

\begin{abstract}
    Markovian open quantum systems are well modeled by the Lindblad Master Equation (ME) $\frac{\upd}{\upd t} \rho_t =  \mathcal{L} \rho_t$, where $\mathcal{L}$ is a linear (super-)operator and $\rho_t$ is the system state, a positive matrix. When designing or characterizing a quantum system, one is usually interested in the steady state $\rho_\infty$ (such that $\mathcal{L} \rho_\infty = 0$), the first few excited states, and trajectories $t\mapsto \rho_t$. In finite dimension, $\rho_t$ is an $n\times n$ matrix, $\mathcal{L}$ thus typically costs $n^4$ to store explicitly as a dense matrix, and $O(n^6)$ to diagonalize or invert exactly, making standard linear algebraic techniques expensive for large systems. However, $\mathcal{L}$ usually costs only $O(n^3)$ to apply. This makes iterative methods appealing, but they do not work without a good preconditioner. In this article, our main observation is that a part of the Lindblad equation, corresponding to the so-called no-jump evolution $\mathcal{S}$, can be inverted efficiently. Using this inverse map, we introduce an auxiliary completely positive trace-preserving (CPTP) map $\Phi$ whose fixed point is directly related to $\rho_\infty$, all the other eigenvalues having smaller magnitude. The map $\Phi$ is thus well suited to iterative methods, and $\rho_\infty$ can be found in a few Arnoldi iterations. Using the same inverse map $\mathcal{S}^{-1}$ as preconditioner, we compute the low-lying spectrum efficiently via shift-invert Arnoldi, and, as a proof of concept, build an implicit time integrator that is competitive on stiff systems in the low-precision regime. For the steady-state and low excited states problems, our methods scale like $O(n^3)$ per iteration and offer state-of-the-art performance on CPU and GPU.
\end{abstract}

\section{Introduction}

\subsection{Motivations and scope of the paper}
The Gorini--Kossakowski--Sudarshan--Lindblad (GKSL) equation~\cite{gorini1976,Lindblad1976} is the most general dynamics that can be followed by a Markovian open quantum system. For a density matrix $\rho$ acting on a finite dimensional complex Hilbert space $\mathscr{H}= \mathbb{C}^n$, the GKSL equation reads, in its standard form,
\begin{align}\label{eq:lindblad_def}
    \frac{\upd \rho_t}{\upd t} = \mathcal{L}(\rho_t) := -i[H,\rho_t] + \sum_{j=1}^d \left( L_j\rho_t L_j^\dagger - \frac{1}{2}\{L_j^\dagger L_j,\rho_t\}\right)
\end{align}
where $H = H^\dagger$ is the system Hamiltonian, and the $L_j$ are arbitrary matrices which parameterize the so-called dissipators $\mathcal{D}[L_j](\rho) :=  L_j\rho L_j^\dagger - \frac{1}{2}\{L_j^\dagger L_j,\rho\}$. The linear operator $\mathcal{L}$ generating the dynamics is called the Lindbladian, or sometimes Liouvillian. One can span all possible such dynamical generators $\mathcal{L}$ with $d\leq n^2 - 1$ dissipators.

Besides this mathematical generality in the Markovian case, the GKSL equation often provides an excellent approximate description for physical quantum systems coupled to realistic environments, even if they are not strictly Markovian \cite{breuer2002theory}. As a result, the GKSL equation is central for the quantitative understanding of open quantum systems, and is used routinely to model a wide range of quantum experiments, whenever decoherence and dissipation matter.

In practice, there are three main questions one is interested in answering for Markovian open quantum systems. The first is to know the steady state $\rho_\infty$ of the dynamics, that is a positive semi-definite matrix with trace $1$ such that $\mathcal{L}(\rho_\infty) = 0$; such a state always exists in finite dimension. Assuming that it is unique, a fairly common case, this is the state reached by the system at long times, after transients have decayed. This is sometimes how quantum states of interest are prepared: one simply designs a dissipative dynamics $\mathcal{L}$ of which they are the stationary points. Further, observables in the steady state are easy to evaluate experimentally,
thus used extensively to calibrate the physical parameters of a system \cite{ferrari2023steadystatequantumchaosopen,Dahan2022,KesslerPRA12,CarmichaelPRX15,MingantPRA18_Spectral,Hartmann2006}. Hence, efficient steady-state computation is critical. Interestingly, steady states of the GKSL equation also appear in the context of tensor network states. In particular, they enter into the expression of expectation values of observables of continuous matrix product states (CMPS) in the thermodynamic limit \cite{verstraete2010,haegeman2013}. Having an efficient method to compute steady states would thus allow working with far larger CMPS.

The second problem is to compute the spectral gap of the Lindbladian, or more generally its first few eigenvalues, because they determine the speed of convergence to the steady state.

The last problem is to compute the real-time dynamics, \ie to provide a good approximation of the solution of \cref{eq:lindblad_def} with given initial condition $\rho_0$.

Some example applications presented in this article involve an underlying infinite-dimensional Hilbert space, such as bosonic systems; in these cases, we truncate the Hilbert space to a finite dimension and do not consider the resulting truncation errors (see \cref{sec:cat} for details). Ultimately, we thus assume the Hilbert space dimension $n$ to be finite, but possibly quite large \ie potentially $n\geq 10^3$.

The GKSL equation is a simple linear equation, and thus it would seem the three problems we just mentioned should be trivial to solve using standard linear algebraic techniques (exact diagonalization, and exponentiation of $\mathcal{L}$). Indeed, one can answer these questions exactly by seeing $\mathcal{L}$ as a linear operator on a vector space of dimension $n^2$. With the standard implementation of matrix multiplication, that is, excluding fast matrix multiplication à la Strassen or Coppersmith-Winograd~\cite{blaser2013fastmatmul}, this implies a solution at cost $O((n^2)^3)=O(n^6)$ for all three problems. Unfortunately, this cost becomes prohibitive for $n \gtrsim 100$, which is common for properly discretized systems of a few bosons.

Fortunately, the Lindblad operator $\mathcal{L}$ is cheaper to apply than one may expect, because it is often sparse in two ways. First, the Hamiltonian $H$ and generators $L_j$ can be sparse matrices, for example if they can be expressed as simple combinations of discretized ladder operators. This sparsity is common in practice for bosonic systems, but not systematic, and thus we will not require it. The second source of sparsity is that the number $d$ of generators $L_j$ is usually $O(1)$, far from saturating the upper bound $n^2 - 1$. This, on the other hand, is the typical situation for the open quantum systems one encounters in practice. In the latter case, applying $\mathcal{L}$ to a state is done at the cost of matrix multiplication, that is $n^3$ (instead of the naive $n^4$).

This well-known observation opens the way to using efficient iterative techniques based on Krylov subspaces to solve the three problems mentioned above, at cost $O(n^3)$. Unfortunately, in many situations we suffer from the fact that $\LL$ is ill-conditioned, which is expected when solving a discretization of an initially unbounded Lindbladian. Thus, using iterative methods directly on $\LL$ gives disappointing results. The main idea of this paper is to leverage the computationally cheap inversion of the no-jump part to provide efficient iterative techniques. %

This paper is organized as follows. In \cref{sec:fixed_point} we review existing approaches for computing the steady state and introduce a new method based on the fixed point of a Completely Positive, Trace-Preserving (CPTP) map. In \cref{sec:preconditioner} we build a preconditioner from the solution of a Lyapunov equation and relate it to the resolvent of the Lindbladian. In \cref{sec:applications} we apply this preconditioner to three tasks: computing the steady state as the solution of a linear system, computing the low-lying spectrum of $\LL$ through a shift-invert transformation, and building an implicit method for time evolution. Each application is presented together with the relevant state of the art and with numerical benchmarks on representative examples. The appendix provides additional details: \cref{sec:num_implementation} describes classical methods for solving the Lyapunov equation, \cref{sec:details_num_benchmark} gives the system parameters and implementation details used in the benchmarks, and \cref{app:properties_phi} provides a proof of the fact that the map defined in \cref{sec:fixed_point} is indeed a trace preserving map.

\subsection{Notations and preliminaries}
We denote superoperators in calligraphic font (e.g. $\mathcal{L}$) or capital Greek letters (e.g. $\Phi$) and operators on the Hilbert space by capital letters (e.g. $H$, $L_j$). On the latter, we consider the operator norm $\|A\|_\infty := \sup_{\|x\|=1} \|Ax\|$, the trace-norm $\|A\|_1 := \Tr{\sqrt{A^\dagger A}}$, and the entrywise max-norm $\|A\|_{\max} := \max_{i,j} |A_{ij}|$ (which is only used as a numerical stopping criterion in this paper). On superoperators, we consider the operator norm induced by the trace norm, \ie $\|\mathcal{L}\| := \sup_{\|\rho\|_1=1} \|\mathcal{L}(\rho)\|_1$. The adjoint of an operator (with respect to Hilbert space scalar product) $A$ is denoted $A^\dagger$, while the adjoint of a super-operator (with respect to the Hilbert--Schmidt scalar product) $\mathcal{X}$ is denoted
$\mathcal{X}^*$.

The Lindblad equation \eqref{eq:lindblad_def} can be decomposed as
\begin{align}\label{eq:lindblad_decomposition}
    \mathcal{L}(\rho) & = \SS(\rho)+ \KK(\rho),                                               \\
    \SS(\rho)         & \coloneq G\rho + \rho G^\dag, \qquad
    \KK(\rho) \coloneq \sum_{j=1}^d L_j\rho L_j^\dagger,                                      \\
    G                 & \coloneq -iH - \tfrac{1}{2}\sum_{j=1}^d L_j^\dagger L_j. \label{eq:G}
\end{align}
Here $\KK$ is called the generator of the jump part of the Lindbladian and $\SS$ the generator of the no-jump part. Note that both generate completely positive maps, i.e. $e^{t\SS}$ and $e^{t\KK}$ are Completely Positive (CP) maps for all $t\geq 0$.

In this paper, we will repeatedly need to invert $\SS$, that is to solve the so-called continuous Lyapunov equation
\begin{align}
    \label{eq:lyapunov}
    \SS(X)=GX + XG^\dag = Y,
\end{align}
for a given operator $Y$.
This is a well studied problem in control theory, see \cite{10.1093/imamci/9.4.275} for a historical perspective. Note that from \cref{eq:G}, it is clear that the spectrum of $G$ always lies in the closed left half-plane. Under the assumption that $G$ has no eigenvalue on the imaginary axis, the solution is unique and can be expressed as
\begin{align}
    \label{eq:lyapunov_solution} X =\SS^{-1}(Y)= -\int_0^\infty e^{tG}Ye^{tG^\dag}\upd t.
\end{align}
In particular, $-\SS^{-1}$ is completely positive, being an integral of maps of the form $Y \mapsto e^{tG} Y e^{tG^\dag}$.

Solving such a Lyapunov equation is known to be computationally efficient, namely $O(n^3)$ while one might naively assume that, as the inversion of a superoperator, it would require $O(n^6)$ operations. We give more details on the algorithms in \cref{sec:continuous_lyapunov}.

\begin{remark}[Non-uniqueness of the decomposition]
    \label{rk:gauge}
    The decomposition in \cref{eq:lindblad_decomposition} is not unique, and the choice matters: the split is what decides how much of $\LL$ we invert exactly through a Lyapunov solve, and how much is left to be resolved. The freedom to exploit is the shift of each jump operator by a multiple of the identity. Given $c \in \CC^d$, set
    \begin{align}\label{eq:lindblad_gauge}
        \tilde L_j & \coloneq L_j + c_j \Id, \qquad j = 1, \dots, d,                                                                  \\
        \tilde H   & \coloneq H + \frac{1}{2i} \sum_{j=1}^d \p{ c_j^* \tilde L_j - c_j \tilde L_j^\dag }, \label{eq:lindblad_gauge_2}
    \end{align}
    the latter being Hermitian, and let $\tilde G$ be defined from $(\tilde H, \tilde L_j)$ as in \cref{eq:G}. A direct computation then gives
    \begin{align}\label{eq:lindblad_gauge_split}
        \LL(\rho) = \underbrace{\tilde G \rho + \rho \tilde G^\dag}_{\tilde \SS(\rho)} + \underbrace{\sum_{j=1}^d \tilde L_j \rho \tilde L_j^\dag}_{\tilde \KK(\rho)} .
    \end{align}
    Every $c$ thus yields an admissible no-jump/jump split, \cref{eq:lindblad_decomposition} being the one with $c = 0$. Precisely, $\tilde G = G - B - \frac{1}{2}\sum_j |c_j|^2$ with $B \coloneq \sum_j c_j^* L_j$, so the shift moves $B\rho + \rho B^\dag + \sum_j |c_j|^2 \rho$ from the no-jump part to the jump part, the latter remaining a Kraus map by construction.

    Two further gauges leave the split untouched: the first is an isometric mixing of the jump operators, $\tilde L_i \coloneq \sum_j u_{ij} L_j$ for $i = 1, \dots, m$ with $m \geq d$ and $u^\dag u = I_d$, the second is a global phase $\tilde H \mapsto \tilde H - \gamma \Id$, $\gamma \in \mathbb{R}$. They are not relevant here since they leave the map $\KK$ unchanged.

    The gauge freedom in \eqref{eq:lindblad_gauge} and \eqref{eq:lindblad_gauge_2} has been leveraged before, \eg in \cite{Cao2025Sep} to optimize the unravelling of the Lindblad equation. We stick throughout to the split of \cref{eq:lindblad_decomposition}, and discuss in the conclusion the potential benefits of optimizing over $c$.
\end{remark}

\section{Computing the steady state as the fixed point of CPTP maps}
\label{sec:fixed_point}
\subsection{State of the art}
We recall that the dimension of the kernel of $\LL$ is at least one in finite dimension, and that $\ker\LL$ always contains a density matrix \cite[Chapter 6]{wolf_guided_tour_2012}. We assume in this section that this dimension is exactly one. In this case, there exists a unique steady state $\rho_\infty$, i.e. $\rho_\infty \succeq 0$ with $\LL \rho_\infty = 0$ and $\Tr \rho_\infty = 1$. Standard approaches to compute the steady state can be classified into three main categories: diagonalization of the Lindbladian, solving a linear problem, or simulating the long-time dynamics.

\paragraph{Diagonalization of the Lindbladian.}
One finds $\rho_\infty$ as the eigenvector of $\mathcal{L}$ with eigenvalue of lowest magnitude (here, $0$). This can be instantiated in two ways:
\begin{itemize}
    \item \textbf{Dense direct solvers}: one constructs explicitly the Liouvillian $\LL$ as a $n^2 \times n^2$ matrix and diagonalizes it using standard linear algebra libraries. Alternatively, one can diagonalize $\LL^* \LL$ which is positive semi-definite, and exploit this structure.
    \item \textbf{Iterative methods}: Using the Arnoldi method, one can iteratively find the states with eigenvalue of lowest magnitude applying only $\LL$ as a function. Iterative techniques are fast when one targets eigenstates corresponding to eigenvalues with \emph{largest} magnitude, and thus the performance of this approach is typically not better than the dense approach above, unless the shift-invert method is used. This amounts to solving a linear system involving the Lindbladian, which is expensive and requires a good preconditioner to be efficient (see \cref{sec:low_lying}).
\end{itemize}

\paragraph{Solving a linear problem.}
Since the identity is a left eigenvector of the Lindbladian, it is straightforward to check that $\rho_\infty$ is the unique solution of the following linear problem:
\begin{align}
    \label{eq:steady_state_linear_problem}
    \widetilde \LL_\eta \rho                & =\eta \Id,                                  \\
    \text{where } \widetilde \LL_\eta(\rho) & \coloneq \LL(\rho) + \eta \Tr(\rho) \; \Id,
\end{align}
where $\eta>0$ is an arbitrary positive constant. In the following, we sometimes fix $\eta=1$ and denote $\widetilde \LL=\widetilde \LL_1$. Classical methods to solve \cref{eq:steady_state_linear_problem} include:
\begin{itemize}
    \item \textbf{Dense direct solvers}: Constructing the Liouvillian superoperator as a dense matrix ($n^2 \times n^2$) and using LU/SVD. Scales as $O(n^6)$.
    \item \textbf{Sparse direct solvers}: If the Hamiltonian and jump operators are sparse, $\LL$ is also sparse, and so is the pinned system $\widetilde \LL_\eta$\footnote{QuTiP pins the trace condition with the slightly different term $\eta \Tr(\rho)\,\ketbra{0}{0}$, and right-hand side $\eta \ketbra{0}{0}$ in place of $\eta \Id$. Any operator $P$ with $\Tr P \neq 0$ substituted for $\Id$ in \cref{eq:steady_state_linear_problem} leaves $\rho_\infty$ as the unique solution, and the choice $\ketbra{0}{0}$ makes the system sparser still. With our own solver we found the symmetrized deflation $\eta \Tr(\rho) \Id$ to perform best.}, which can be exploited using sparse linear solvers such as sparse LU (as implemented in QuTiP). A more advanced and parallelizable alternative is to use MUltifrontal Massively Parallel sparse direct Solver (MUMPS) \cite{MUMPS:1,MUMPS:2}.
    \item \textbf{Iterative methods}: Using GMRES (or another variant) on the linear system. Unfortunately, for many systems of interest it does not work properly without a good preconditioner. QuTiP~\cite{johansson2012qutip}, for example, proposes an incomplete LU preconditioner, a general-purpose preconditioner for sparse systems. It may however not always be adapted to the specific structure of the Lindblad equation. Speedup can be obtained in specific cases, when computing multiple related steady states. The cost of computing the preconditioner can then be amortized \cite{melo2025variational}.
\end{itemize}
In \cref{sec:preconditioner}, we propose a preconditioner that is adapted to the structure of the Lindbladian, and in \cref{sec:low_lying} show it can be used to compute the low-lying spectrum of $\LL$.

\paragraph{Simulating long-time dynamics.}
A last approach, often used in practice, is to obtain the steady state by evolving an initial state $\rho_0\geq 0, \Tr{\rho_0}=1$ for a sufficiently long time $T$: $\rho_\infty = \lim_{T \to \infty} e^{\LL T} \rho_0$. This requires integrating the master equation, which can be computationally expensive for stiff systems (as they require small time-steps or expensive solvers) and/or those with small spectral gaps (as the convergence with time $T$ is slow).

\subsection{New method based on an auxiliary CPTP map}
\label{subsec:fixed_point_method}

We propose a new method that reformulates the steady state problem as that of finding the fixed point of a specific CPTP map $\Phi = -\mathcal{K}\mathcal{S}^{-1}$, defined whenever $\SS$ is invertible. Then, as a CPTP map, $\Phi$ has its spectrum contained in the closed unit disk. An interesting result is that under the assumptions of \cref{thm:steady-state-eigenvalue} below, $1$ is a simple eigenvalue of maximal modulus, whose eigenvector $\xi_\infty$ yields the steady state through $\rho_\infty = -\SS^{-1}\xi_\infty / \Tr(-\SS^{-1}\xi_\infty)$, making Arnoldi iterations particularly efficient. Further, $\Phi$ can be efficiently applied at cost $O(n^3)$.

The first point is made precise thanks to the following theorem:
\begin{theorem}[Steady state as fixed point of a linear operator]\label{thm:steady-state-eigenvalue}
    Assume $\dim\ker(\LL)=1$ and recall that $G=-iH - \frac{1}{2}\sum_{j=1}^d L_j^\dagger L_j$. Then either
    \begin{itemize}
        \item $\vecspec(G)\cap i\mathbb{R}\neq \emptyset$. In this case there exists an eigenstate of $H$ in the kernel of all jump operators, and the kernel of $\LL$ is the span of the density matrix associated to that eigenstate.
        \item $\vecspec(G)\cap i\mathbb{R}=\emptyset$. Then $\SS$ is invertible and $\Phi\coloneq -\KK\SS^{-1}$ is a CPTP map with a unique invariant state $\xi_\infty$. Besides, there exists a positive constant $c$ such that $\rho_\infty= -c \SS^{-1}\xi_\infty$.
    \end{itemize}
\end{theorem}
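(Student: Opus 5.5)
We treat the two alternatives separately, since they are exhaustive.

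\emph{First case.} Suppose $G\psi = i\lambda\psi$ with $\lambda\in\mathbb{R}$ and $\|\psi\|=1$. Taking the real part of $\langle\psi, G\psi\rangle$ gives
\[
-\tfrac12\sum_j \|L_j\psi\|^2 = 0 ,
\]
because $\langle\psi,-iH\psi\rangle$ is purely imaginary. Hence $L_j\psi=0$ for every $j$. Substituting back, $G\psi=-iH\psi$, so $H\psi=-\lambda\psi$ and $\psi$ is an eigenvector of $H$ annihilated by all jump operators. Now set $\rho=\ketbra{\psi}{\psi}$. Then $\KK(\rho)=0$, and
\[
\SS(\rho)=G\ketbra{\psi}{\psi}+\ketbra{\psi}{\psi}G^\dagger = i\lambda\rho - i\lambda\rho = 0,
\]
so $\rho\in\ker\LL$. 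Since $\dim\ker\LL=1$, the kernel is spanned by $\rho$.

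\emph{Second case.} The eigenvalues of $\SS$ are the sums $\mu+\bar\nu$ with $\mu,\nu\in\vecspec(G)$. All of these have strictly negative real part, so $\SS$ is invertible and the integral formula \eqref{eq:lyapunov_solution} holds.

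Complete positivity of $\Phi$ follows because $-\SS^{-1}$ and $\KK$ are both CP, and a composition of CP maps is CP.

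For trace preservation, we check $\Phi^*(\Id)=\Id$. We have $\Phi^*=-(\SS^*)^{-1}\KK^*$, so it suffices to show $\SS^*(\Id)=-\KK^*(\Id)$. Indeed,
\[
\SS^*(\Id)=G^\dagger+G=-\sum_j L_j^\dagger L_j=-\KK^*(\Id).
\]
Equivalently, this is the identity $\LL^*(\Id)=0$. Applying $-(\SS^*)^{-1}$ to both sides gives $\Phi^*(\Id)=\Id$.

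Next we relate fixed points of $\Phi$ to the kernel of $\LL$. Let $X=-\SS^{-1}\xi$. Then
\[
\Phi\xi=\xi \iff \KK X = -\SS X \iff \LL X=0 .
\]
The map $\xi\mapsto -\SS^{-1}\xi$ is a linear bijection, so it carries $\ker(\Phi-\mathrm{id})$ onto $\ker\LL$, and $\ker(\Phi-\mathrm{id})$ is one-dimensional.

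Since $\Phi$ is CPTP on a finite-dimensional space, it has an invariant density matrix $\xi_\infty$. The invariant states form the intersection of this one-dimensional space with the density matrices, so $\xi_\infty$ is the unique invariant state.

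Finally, $X=-\SS^{-1}\xi_\infty$ lies in $\ker\LL$. Because $-\SS^{-1}$ is CP, $X\succeq0$. Moreover $X\neq0$ by injectivity, so $\Tr X>0$. Since the kernel is spanned by $\rho_\infty$, we get $X=\rho_\infty/c$ with $c=1/\Tr X>0$, which is the claimed relation.

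The main point needing care is trace preservation, which here reduces to the identity $\LL^*(\Id)=0$ and the invertibility of $\SS^*$. The remaining steps are formal. Note that the statement does not require the peripheral spectrum of $\Phi$ to reduce to $\{1\}$, so we do not need to prove it.
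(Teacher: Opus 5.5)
Your proof is correct and follows the paper's structure: the real-part argument in the first case, and in the second case complete positivity by composition, the kernel correspondence through $\xi\mapsto-\SS^{-1}\xi$, and positivity of $-\SS^{-1}\xi_\infty$. There are two small differences, both valid and slightly shorter. You check trace preservation algebraically via $\SS^*(\Id)=-\KK^*(\Id)$ and invertibility of $\SS^*$, whereas the paper's appendix evaluates the integral $\int_0^\infty e^{tG^\dag}\big(\sum_j L_j^\dag L_j\big)e^{tG}\,dt$ explicitly. You also get the kernel correspondence by the direct substitution $X=-\SS^{-1}\xi$, whereas the paper passes through the similar operator $\mathcal{M}=-\SS^{-1}\KK$.
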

Loosely speaking, either the steady state is a pure state, in which case it is easy to find by diagonalizing the no-jump generator $G$ of \cref{eq:G}, or it is a mixed state and can be found as the fixed point of a CPTP map that is not computationally expensive to evaluate. Let us now give the proof of this result.

\begin{proof}
    Assume $\vecspec(G)\cap i\mathbb{R}\neq\emptyset$. Let $-i\lambda$, with $\lambda\in\mathbb{R}$, be an eigenvalue of $G$ with normalized eigenvector $\ket{\psi_\lambda}$, i.e.
    $G\ket{\psi_\lambda}=-i\lambda\ket{\psi_\lambda}$. Taking the scalar product with $\bra{\psi_\lambda}$ gives
    \begin{equation}
        \bra{\psi_\lambda}G\ket{\psi_\lambda} = -i\bra{\psi_\lambda}H\ket{\psi_\lambda} - \frac{1}{2}\sum_j\|L_j\psi_\lambda\|^2 = -i\lambda.
    \end{equation}
    Taking the real part yields $\sum_j\|L_j\psi_\lambda\|^2 = 0$, hence $\ket{\psi_\lambda}\in\bigcap_j\ker(L_j)$; then $G\ket{\psi_\lambda} = -iH\ket{\psi_\lambda}$ shows that $\ket{\psi_\lambda}$ is an eigenvector of $H$ with eigenvalue $\lambda$. A direct computation then gives $\LL(\ket{\psi_\lambda}\bra{\psi_\lambda}) = 0$, and since $\dim\ker(\LL)=1$, we get $\ker\LL = \mathbb{C}\ket{\psi_\lambda}\bra{\psi_\lambda}$.

    Assume now $\vecspec(G)\cap i\mathbb{R}=\emptyset$; since the spectrum of $G$ lies in the closed left half-plane, this means $\Re(\vecspec(G))<0$. Then $\vecspec(G) \cap \vecspec(-G^\dag) = \emptyset$, so the Sylvester equation $GX+XG^\dag=Y$ has a unique solution for every $Y$ given by \cref{eq:lyapunov_solution}. Hence, $\SS$ is invertible. The map $-\KK\SS^{-1}$ is completely positive as the composition of the two completely positive maps $-\SS^{-1}$ (see \cref{eq:lyapunov_solution}) and $\KK$. The trace-preserving property is checked in \cref{app:properties_phi}.

    Let us now make the connection between the fixed points of $\Phi=-\KK\SS^{-1}$ and $\LL$. For any matrix $\rho$ we have
    \begin{align*}
        \rho\in\ker\LL
         & \iff \KK(\rho)+\SS(\rho)=0                                                  \\
         & \iff -\SS^{-1}\KK(\rho)=\rho                                                \\
         & \iff \rho\in\ker(\mathcal{M}-\Id), \qquad \mathcal{M}\coloneq -\SS^{-1}\KK,
    \end{align*}
    so that $\ker(\mathcal{M}-\Id)=\mathbb{C}\rho_\infty$ is one-dimensional. The operators $\Phi=-\KK\SS^{-1}$ and $\mathcal{M}=-\SS^{-1}\KK$ are similar, $\mathcal{M}=\SS^{-1}\Phi\SS$, and their fixed points are in one-to-one correspondence:
    \begin{itemize}
        \item if $\mathcal{M}\rho=\rho$, then $\Phi(\KK\rho)=-\KK\SS^{-1}\KK\rho=-\KK(-\rho)=\KK\rho$, with $\KK\rho\neq0$ since $\SS^{-1}\KK\rho=-\rho\neq0$;
        \item if $\Phi\xi=\xi$, then $\mathcal{M}(-\SS^{-1}\xi)=\SS^{-1}\KK\SS^{-1}\xi=-\SS^{-1}\xi$, using $\KK\SS^{-1}\xi=-\Phi\xi=-\xi$.
    \end{itemize}
    Hence $\ker(\Phi-\Id)$ is one-dimensional as well, and the injective map $\xi\mapsto-\SS^{-1}\xi$ maps it onto $\ker(\mathcal{M}-\Id)=\ker\LL$, which proves $\ker\LL=-\SS^{-1}\ker(\Phi-\Id)$. Moreover, since $\Phi$ is CPTP, it admits an invariant steady state. Besides, as $\ker(\Phi-\Id)$ is one-dimensional, this invariant state $\xi_\infty$ is unique. Finally, $-\SS^{-1}\xi_\infty$ is positive semi-definite (by complete positivity of $-\SS^{-1}$) and nonzero, so its trace is positive, and $\rho_\infty=-c\,\SS^{-1}\xi_\infty$ with $c=1/\Tr(-\SS^{-1}\xi_\infty)>0$.
\end{proof}

\begin{remark}[Correspondence of kernels without uniqueness]
    \label{rk:kernel-correspondence}
    The assumption $\dim\ker\LL = 1$ plays no role in the correspondence itself. Assuming only that $\SS$ is invertible, the proof above shows that $\ker\LL = \ker(\mathcal{M}-\Id)$, with $\mathcal{M} = -\SS^{-1}\KK$, and that $\xi \mapsto -\SS^{-1}\xi$ maps $\ker(\Phi-\Id)$ bijectively onto it, so that
    \begin{equation}
        \ker \LL = -\SS^{-1}\ker(\Phi-\Id)
    \end{equation}
    holds in general; in particular $\dim\ker(\Phi-\Id) = \dim\ker\LL$. Uniqueness of the steady state is only used to conclude that these two kernels are one-dimensional.
\end{remark}

\paragraph{Computing the fixed point of $\Phi$.}
\Cref{thm:steady-state-eigenvalue} allows us to compute the steady state $\rho_\infty$ by finding the fixed point of the CPTP map $\Phi=-\KK\SS^{-1}$. We recall that any CPTP map has its spectrum contained in the closed unit disk, and all eigenvalues of modulus one are semisimple \cite[Theorem 6.6]{wolf_guided_tour_2012}. Since $\ker(\Phi - \Id)$ is one-dimensional and $1$ is a semisimple eigenvalue, $1$ is a simple eigenvalue of $\Phi$. Moreover it is of maximal modulus, which is precisely the regime in which Krylov methods are effective --- in contrast with $\LL$, whose eigenvalue of interest is the smallest in magnitude. Simplicity does not, however, preclude \emph{other} eigenvalues of modulus one, and \cref{rk:peripheral} exhibits a system where $\Phi$ has three: a power iteration on $\Phi$ would then fail to converge. Arnoldi is less exposed, because it builds the whole Krylov subspace, in which every peripheral mode is represented, and we retain the Ritz vector whose Ritz value is closest to $1$ rather than the dominant one (\cref{alg:steady-state-arnoldi}). In \cref{fig:spectrum_L_and_Phi}, we plot the spectrum of $\LL$ and $\Phi$ for a random Lindbladian (see \cref{sec:systems_studied} for details).

\begin{figure}[htb]
    \centering
    \includegraphics[width=\linewidth]{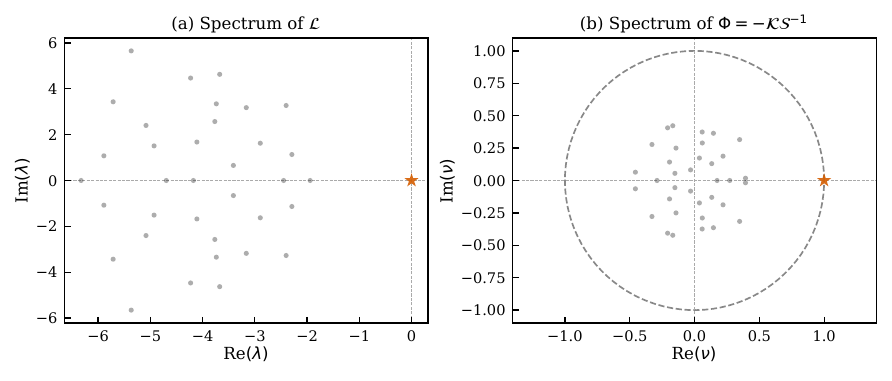}
    \caption{Spectra of a random dense Lindbladian (see \cref{sec:systems_studied}) for a Hilbert space of dimension $n=6$.
        (a)~Spectrum of $\LL$; even for low-dimensional systems, the large spread of eigenvalues reflects the poor conditioning of $\LL$.
        (b)~Spectrum of $\Phi = -\mathcal{K}\mathcal{S}^{-1}$; all eigenvalues lie within the unit circle (dashed).
        In both panels, the red star marks the eigenvalue corresponding to the steady state.}
    \label{fig:spectrum_L_and_Phi}
\end{figure}

\begin{remark}[Peripheral spectra of $\Phi$]
    \label{rk:peripheral}
    For CPTP maps arriving as solutions of a Lindblad equation, assuming uniqueness of the steady state ensures that there cannot be any non-trivial peripheral eigenvalues \cite[Proposition 7.5]{wolf_guided_tour_2012}, that is eigenvalues of modulus one different from $1$. This is not the case for $\Phi$, as illustrated by the following example: consider a three level system with jump operators $L_1=\ket{0}\bra{1}, L_2=\ket{1}\bra{2}$ and $L_3=\ket{2}\bra{0}$, and no Hamiltonian. The unique steady state is the maximally mixed state $\rho_\infty=\frac{\Id}{3}$, but $\Phi$ has three peripheral eigenvalues $1, e^{2\pi i/3}, e^{-2\pi i/3}$. This situation is completely analogous to that of periodic classical Markov chains, whose transition matrices have several eigenvalues on the unit circle despite the stationary distribution being unique.
\end{remark}

\begin{remark}[Relation with a numerical integration scheme]
    The very recent paper \cite{doi:10.1137/24M1690795} introduces a first-order, CPTP time-discretization of the Lindblad equation, given by:
    \begin{align}
        \label{eq:scheme_hao}
        \rho_{n+1} & =e^{\delta t G}\rho_n e^{\delta t G^\dag} + \sum_j \int_0^{\delta t} L_j e^{(\delta t-s)G} \rho_n e^{(\delta t-s)G^\dag} L_j^\dag ds \\
                   & =e^{\delta t G}\rho_n e^{\delta t G^\dag} + \Phi(\rho_n- e^{\delta t G}\rho_n e^{\delta t G^\dag}),
    \end{align}
    where the second equality holds whenever $\SS$ is invertible. Since $\Phi$ is CPTP, the integrator above is CPTP as well. Note that in the limit $\delta t\to+\infty$, the scheme reduces to $\rho_{n+1} = \Phi(\rho_n)$. It is thus remarkable that the steady state of this scheme in the limit $\delta t\to+\infty$, despite not being equal to the steady state of the continuous Lindblad dynamics, is directly related to it via \cref{thm:steady-state-eigenvalue}.

\end{remark}
\subsection{Numerical benchmarks}
\label{sec:benchmark_steady_state}
\subsubsection{Description of the benchmarks}
\label{sec:systems_studied}

In order to validate the theoretical efficiency of the methods introduced in \cref{thm:steady-state-eigenvalue}, we benchmark them on two classes of Lindbladian systems with markedly different structural properties:
\begin{itemize}
    \item A \emph{dense random Lindbladian}, designed to assess robustness and performance on generic, unstructured generators: a random Hermitian Hamiltonian together with three dense jump operators with i.i.d.\ complex-normal entries. Details are provided in \cref{app:dense_gen}.
    \item A \emph{physically motivated sparse Lindbladian} describing memory--buffer cat states, representative of hardware-induced dynamics encountered in the \textit{Alice \& Bob}\footnote{\url{https://alice-bob.com/}} quantum processors. All four jump operators and the Hamiltonian are sparse, and the system exhibits a small spectral gap. Details are provided in \cref{sec:cat}.
\end{itemize}

The tests are performed on a single compute node equipped with a CPU Intel Xeon Platinum 8481C and 70\,GB of RAM; GPU benchmarks were conducted on an Nvidia H100 with 80\,GB of memory. All computations use double-precision floating-point arithmetic.

For iterative algorithms, convergence is declared when all entries of $\LL(\rho)$ fall below $10^{-8}$ in absolute value, i.e. $\norm{\LL(\rho)}_{\max} < 10^{-8}$, after ensuring $\rho$ is Hermitian with trace 1. Each simulation is terminated early if the runtime exceeds $100\,\mathrm{s}$ or if available memory is exhausted.
The reported quantity is the total wall-clock time as a function of the Hilbert space dimension~$n$. The methods that need the Liouvillian superoperator explicitly represent it as a matrix of size $n^2 \times n^2$, possibly in a sparse data format for the cat qubit example.

\subsubsection{Benchmark}

\begin{figure}[tb]
    \centering
    \includegraphics[width=\linewidth]{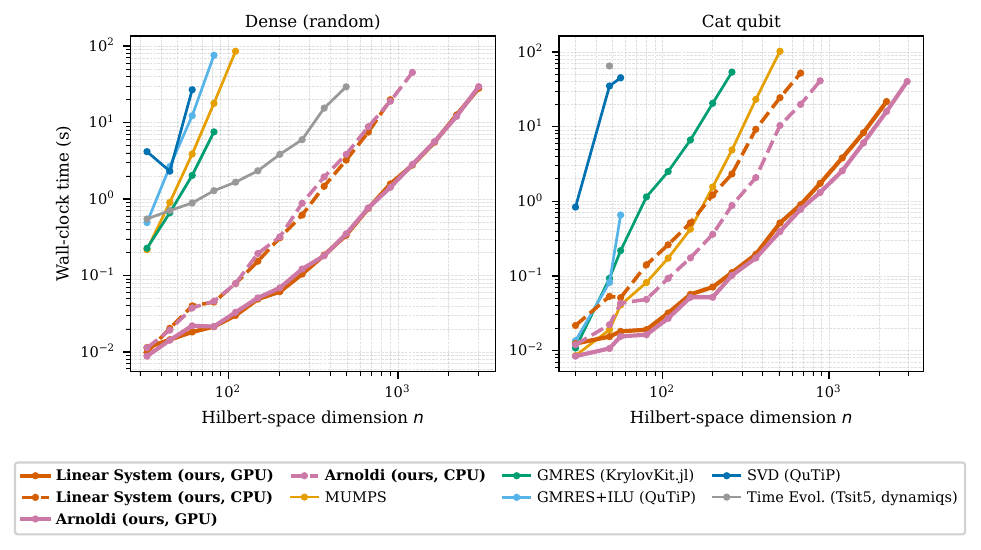}
    \caption{Runtime of steady-state solvers versus Hilbert-space dimension $n$, for the dense random Lindbladian (left) and the memory--buffer cat qubit (right). Our two methods (Arnoldi on $\Phi$ and preconditioned GMRES) beat all competitors on CPU, and further gain a $\times 50$ speedup on the H100 GPU; the competitors (QuTiP SVD and ILU-GMRES, MUMPS, Krylov.jl GMRES) run on the Xeon CPU, and the time-evolution baseline on GPU for the cat family. See \cref{sec:details_num_benchmark} in the Appendix for the exact configurations.}
    \label{fig:benchmark-steadystate}
\end{figure}

The results of the benchmarks are reported in \cref{fig:benchmark-steadystate}. For small Hilbert space dimensions, several methods achieve comparable runtimes. As the dimension increases, however, the asymptotic scaling of the different approaches becomes clearly visible. The direct SVD-based solver quickly becomes prohibitively expensive, exhibiting the expected $O(n^6)$ scaling associated with dense factorizations. The iterative baselines (the ILU-preconditioned GMRES of QuTiP and the unpreconditioned Krylov.jl GMRES) also scale poorly and encounter memory limitations at larger dimensions. The sparse direct solver MUMPS performs well despite running on CPU, but it is not available on GPU and cannot be applied to large, dense operators arising in the random Lindbladian benchmark.
Finally, the ODE integration approach is significantly slower in absolute runtime, but displays favorable scaling. Its poor performance on the cat qubit benchmark is due to the difficulty of reaching the required tolerance, even at large propagation times\footnote{This is unsurprising: dissipative cat qubits exhibit a strong metastability, due to the exponential suppression of the bit-flip.}.

In contrast, the Arnoldi method applied to $\Phi$ and our preconditioned GMRES solver described in \cref{sec:preconditioner} exhibit excellent scaling, outperform the other methods, and show a further consistent speedup when run on GPU.

\section{A new preconditioner to compute the resolvent}
\label{sec:preconditioner}

Unfortunately, the previous method cannot be straightforwardly adapted to compute the low-lying spectrum of $\LL$. Indeed, the eigenvectors and eigenvalues of $\Phi=-\KK\SS^{-1}$ are not related in a simple way to those of $\LL$ for eigenvalues of $\Phi$ different from $1$. Hence, in this section we propose to leverage the cheap inversion of $\SS$ to build a preconditioner for the shifted Lindbladian $\lambda - \LL$ for some $\lambda > 0$. We present in \cref{sec:applications} three applications of this preconditioner: (i) the computation of the steady state as a linear system (an alternative to the method presented in \cref{subsec:fixed_point_method}), (ii) the computation of the low-lying spectrum of $\LL$ through a shift-invert method, and (iii) the implicit time integration of the Lindblad equation.

\subsection{A preconditioner from the no-jump resolvent}
\label{sec:precond_def}

\paragraph{Reminder: Iterative methods and preconditioning.}
Iterative methods are designed to solve approximately a linear system. Given an invertible complex matrix $A$ and a vector $b$, they look for an approximate solution of $A x = b$ in the Krylov space
\begin{equation}\label{eq:intro-krylov-space}
    \mathcal{V}_p = \vspan \cb{b, A b, \dots, A^{p-1} b},
\end{equation}
that is, they approximate $A^{-1}b$ by $q(A)b$ for some polynomial $q$ of degree less than $p$. Such an approximation is exact as soon as $p$ reaches the degree of the minimal polynomial of $A$ (at most $n^2$ here), but in practice far fewer iterations are needed for a given tolerance. The convergence speed depends on the conditioning of the problem and, for non-normal operators, on the clustering of the eigenvalues away from the origin \cite{saad_gmres,gmres1986generalized}; for non-normal operators eigenvalues alone determine nothing and convergence can be arbitrarily bad \cite{Greenbaum2012Feb,Arioli1998Dec}; this is why iterative methods are seldom used without a preconditioner, that is an invertible operator $P$ which is inexpensive to apply and which approximates $A^{-1}$. We distinguish left preconditioning, which amounts to solving $P A x = P b$, from right preconditioning, which uses
\begin{equation}
    A P y = b, ~~ x = P y.
\end{equation}
While several Krylov-based iterative solver algorithms can be used, we restrict ourselves to GMRES \cite{gmres1986generalized} in our numerical experiments.

\paragraph{The no-jump resolvent.}
We are interested in solving linear systems of the form $(\lambda - \LL)(\rho)=b$ for $\lambda >0$, or the rank-one deflated variant of it $(\lambda -\widetilde \LL_\eta)(\rho)=b$ for $\lambda \geq 0$. The candidate preconditioner is the no-jump resolvent
\begin{equation}\label{eq:no-jump-resolvent-precond}
    \RR_\lambda^\SS \coloneq (\lambda - \SS)^{-1},
\end{equation}
whose basic properties we now record. Throughout, we write
\begin{equation}\label{eq:G-lambda}
    G_\lambda \coloneq G - \tfrac{\lambda}{2} \Id.
\end{equation}
For $\lambda = 0$, we recover the no-jump generator $G$ and the no-jump resolvent $\RR_0^\SS = -\SS^{-1}$ is defined under the assumption that $\SS$ is invertible, see \cref{thm:steady-state-eigenvalue}. For $\lambda > 0$, the operator
\begin{align}
    \label{eq:expr_SS}
    (\lambda - \SS)(X)= -(G_\lambda X + X G_\lambda^\dagger)
\end{align}
is always invertible, as $G_\lambda$ has spectrum in the open left half-plane. The computation of $\RR_\lambda^\SS$ thus amounts again to solving a continuous Lyapunov equation.

\begin{theorem}[$\KK\RR_\lambda^\SS$ is a strict contraction]\label{thm:contraction}
    Let $\lambda > 0$ and let $\norm{\cdot}$ denote the norm induced by the trace norm. Then,
    \begin{equation}\label{eq:contraction-bound}
        \norm{\KK \RR_\lambda^\SS} < 1 .
    \end{equation}
\end{theorem}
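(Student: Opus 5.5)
Since $\KK\RR_\lambda^\SS$ is completely positive (a composition of the CP maps $\KK$ and $\RR_\lambda^\SS$), the plan is to reduce the trace-norm operator norm to its action on positive inputs, compute that action exactly through the trace, and then use compactness of the state space to make the inequality strict.

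\textbf{Positivity and reduction to states.} First, $\RR_\lambda^\SS$ is completely positive. Indeed, by \cref{eq:expr_SS} and the analogue of \cref{eq:lyapunov_solution} with $G$ replaced by $G_\lambda$,
\begin{equation*}
    \RR_\lambda^\SS(Y) = \int_0^\infty e^{tG_\lambda} Y e^{tG_\lambda^\dag}\,\upd t = \int_0^\infty e^{-\lambda t} e^{tG} Y e^{tG^\dag}\,\upd t ,
\end{equation*}
which converges because $G_\lambda$ has spectrum in the open left half-plane. Hence $\Psi \coloneq \KK\RR_\lambda^\SS$ is CP. For a positive (in particular CP) map, the norm induced by the trace norm equals $\|\Psi^*(\Id)\|_\infty$, and the supremum is attained on density matrices. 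Here $\Psi^* = (\RR_\lambda^\SS)^*\KK^*$ is again positive. The standard route to this fact is to write a Hermitian $X$ as $X_+ - X_-$, so that $\|\Psi(X)\|_1 \le \Tr\Psi(X_+)+\Tr\Psi(X_-)$. For non-Hermitian $X$ I would instead invoke the Russo--Dye theorem applied to $\Psi^*$, giving $\|\Psi^*\|_{\infty\to\infty} = \|\Psi^*(\Id)\|_\infty$, and then use duality.

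\textbf{Computing $\Tr\Psi(\rho)$.} For $\rho \succeq 0$ with $\Tr\rho = 1$, set $X = \RR_\lambda^\SS(\rho) \succeq 0$. From $(\lambda-\SS)X = \rho$ and $\Tr\SS(X) + \Tr\KK(X) = \Tr\LL(X) = 0$ we get
\begin{equation*}
    \Tr\KK(X) = -\Tr\SS(X) = \Tr\rho - \lambda\Tr X = 1 - \lambda\Tr X .
\end{equation*}
This is the same identity that gives trace preservation of $\Phi$ when $\lambda = 0$. Therefore $\|\Psi(\rho)\|_1 = \Tr\Psi(\rho) = 1 - \lambda \Tr\RR_\lambda^\SS(\rho)$. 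Equivalently, $\Psi^*(\Id) = \Id - \lambda (\RR_\lambda^\SS)^*(\Id)$.

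\textbf{Strictness.} It remains to show that $\Tr\RR_\lambda^\SS(\rho)$ is bounded below by a positive constant uniformly over states. I expect this to be the only delicate step, since it requires strict positivity rather than mere positivity. We have
\begin{equation*}
    \Tr\RR_\lambda^\SS(\rho) = \int_0^\infty e^{-\lambda t}\,\Tr\!\left(e^{tG}\rho e^{tG^\dag}\right)\upd t ,
\end{equation*}
and the integrand is continuous in $t$ and equals $1$ at $t=0$, so the trace is strictly positive for every state. Since the map is continuous in $\rho$ and the set of states is compact, the minimum $m>0$ is attained. More explicitly, $\Tr(e^{tG}\rho e^{tG^\dag}) \ge \|e^{-tG}\|_\infty^{-2} \ge e^{-2t\|G\|_\infty}$, which gives $m \ge 1/(\lambda+2\|G\|_\infty)$. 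Hence $(\RR_\lambda^\SS)^*(\Id) \succeq m\Id$, and because $\Psi^*(\Id) \succeq 0$, we obtain
\begin{equation*}
    \norm{\KK\RR_\lambda^\SS} = \|\Id - \lambda(\RR_\lambda^\SS)^*(\Id)\|_\infty \le 1 - \lambda m < 1 .
\end{equation*}
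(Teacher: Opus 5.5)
Your proof is correct and follows essentially the same route as the paper: complete positivity of $\KK\RR_\lambda^\SS$, the Russo--Dye reduction to density matrices, the identity $\Tr\KK\RR_\lambda^\SS(\rho) = 1 - \lambda\Tr\RR_\lambda^\SS(\rho)$ obtained from $\Tr\LL = 0$, and strict positivity plus compactness of the state space to get a uniform gap. Your explicit lower bound $\Tr\RR_\lambda^\SS(\rho) \geq 1/(\lambda + 2\|G\|_\infty)$ goes slightly beyond the paper, which does not state one, and it yields the quantitative ratio $\norm{\KK\RR_\lambda^\SS} \leq 2\|G\|_\infty/(\lambda + 2\|G\|_\infty)$.
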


\begin{proof}
    First note that $\RR_\lambda^\SS$ is completely positive, and hence so is $\KK \RR_\lambda^\SS$. The induced trace norm of a completely positive map $\Psi$ is attained on density matrices: by the Russo--Dye theorem $\norm{\Psi} = \norm{\Psi^*}_{\infty \to \infty} = \norm{\Psi^*(\Id)}_\infty$, and $\Psi^*(\Id) \succeq 0$, so that $\norm{\Psi} = \sup_{\rho \succeq 0, \Tr\rho = 1} \Tr \Psi(\rho)$ (see \eg \cite[Chapter 6]{wolf_guided_tour_2012}). As a consequence,
    \begin{equation}
        \norm{\KK \RR_\lambda^\SS} = \sup_{\rho \succeq 0, \Tr \rho = 1} \Tr \KK \RR_\lambda^\SS \rho.
    \end{equation}
    Using the fact that $\Tr \LL \sigma = 0$ for any $\sigma$, we have $\Tr \KK \sigma = - \Tr \SS \sigma$. Hence,
    \begin{equation}\label{eq:proof-resolvent-sum-objective}
        \norm{\KK \RR_\lambda^\SS} = \sup_{\rho \succeq 0, \Tr \rho = 1} -\Tr \SS \RR_\lambda^\SS \rho.
    \end{equation}
    Writing $- \SS \RR_\lambda^\SS = (\lambda - \SS) \RR_\lambda^\SS - \lambda \RR_\lambda^\SS = \Id - \lambda \RR_\lambda^\SS$,
    the trace becomes
    \begin{equation}\label{eq:proof-resolvent-sum-trace}
        -\Tr \SS \RR_\lambda^\SS \rho = 1 - \lambda \Tr \RR_\lambda^\SS (\rho).
    \end{equation}
    For any normalized state $\ket{\psi}$, $\RR_\lambda^{\SS}$ in integral form gives $\Tr (\RR_\lambda^{\SS}(\ket \psi \bra \psi)) = \int_{0}^{\infty} \norm{e^{G_\lambda t} \ket \psi}^2 \dd t > 0$, and by linearity $\Tr \RR_\lambda^\SS (\rho) > 0$ for every density matrix $\rho$.
    Since the set of density matrices is compact in finite dimension, the infimum
    $\delta_\lambda \coloneq \min_{\rho \succeq 0, \Tr\rho = 1} \Tr \RR_\lambda^\SS \rho$
    is attained and is positive. Combining with \cref{eq:proof-resolvent-sum-objective,eq:proof-resolvent-sum-trace} yields $\norm{\KK \RR_\lambda^\SS} = 1 - \lambda \delta_\lambda < 1$.
\end{proof}

\paragraph{Right preconditioning by the no-jump resolvent.}
We can now justify the use of $\RR_\lambda^\SS$ as a right preconditioner for $\lambda - \LL$. Applying $\lambda - \LL$ to the left of $\RR_\lambda^\SS$ and using $\LL = \SS + \KK$ gives
\begin{equation}\label{eq:right-precond-identity}
    (\lambda - \LL)\,\RR_\lambda^\SS = \Id - \KK\RR_\lambda^\SS .
\end{equation}
The quality of the preconditioner is thus governed by $\norm{\KK\RR_\lambda^\SS}$, which \cref{thm:contraction} bounds strictly below $1$. Consequently the spectrum of the preconditioned operator $\Id - \KK\RR_\lambda^\SS$ is contained in the disk of center $1$ and radius $\norm{\KK\RR_\lambda^\SS} < 1$; in particular it is bounded away from the origin. For a non-normal operator, eigenvalue location alone does not, in full generality, determine the behavior of GMRES. Here, however, \cref{thm:contraction} controls a norm which turns into an actual convergence rate.

\begin{remark}[Geometric convergence of the preconditioned GMRES]
    \label{rk:gmres-rate}
    Write $\mathcal{N} \coloneq \KK\RR_\lambda^\SS$, so that the preconditioned system reads $(\Id - \mathcal{N})y = b$. Its $k$-th Krylov space $\mathcal{V}_k$ contains the truncated Neumann iterate $\sum_{j<k}\mathcal{N}^j b$, of residual $\mathcal{N}^k b$, so residual minimality and $\norm{\cdot}_2 \leq \norm{\cdot}_1 \leq \sqrt n\,\norm{\cdot}_2$ on $\CC^{n\times n}$ give the following bound on the $k$-th GMRES residual $r_k$:
    \begin{equation}\label{eq:gmres-rate}
        \norm{r_k}_2 = \min_{y \in \mathcal{V}_k} \norm{b - (\Id - \mathcal{N})y}_2 \;\leq\; \norm{\mathcal{N}^k b}_2 \;\leq\; \sqrt{n}\,\norm{\mathcal{N}}^k \norm{b}_2 :
    \end{equation}
    by \cref{thm:contraction} the preconditioned GMRES converges at least geometrically with ratio $\norm{\mathcal{N}} < 1$, through a bound that uses no spectral information and hence is insensitive to non-normality.
\end{remark}

Note that left preconditioning is less natural, as we would obtain $\Id - \RR_\lambda^\SS\KK$. Interestingly, while $\RR_\lambda^\SS\KK$ and $\KK \RR_\lambda^\SS$ are similar (thus have the same spectrum), contrary to $\KK \RR_\lambda^\SS$, the norm of $\RR_\lambda^\SS\KK$ can be arbitrarily large.\footnote{Take $n = 2$, $H = 0$ and a single jump operator $L = \sqrt\gamma\,\ket{0}\!\bra{1}$, so that $G = -\tfrac{\gamma}{2}\ket{1}\!\bra{1}$ is diagonal in the computational basis. A direct computation gives $\RR_\lambda^\SS\KK(\rho) = \tfrac{\gamma}{\lambda}\,\rho_{11}\,\ket{0}\!\bra{0}$ and $\KK\RR_\lambda^\SS(\rho) = \tfrac{\gamma}{\gamma+\lambda}\,\rho_{11}\,\ket{0}\!\bra{0}$, whence $\norm{\RR_\lambda^\SS\KK} = \gamma/\lambda$, which is unbounded as $\lambda \to 0$, while $\norm{\KK\RR_\lambda^\SS} = \gamma/(\gamma+\lambda) < 1$.}

\begin{remark}[Loss of the contraction at zero shift]
    \label{rk:zero-shift}
    \Cref{thm:contraction} requires $\lambda > 0$, and its conclusion fails exactly at the endpoint. Indeed, if $\SS$ is invertible, then $\RR_0^\SS = -\SS^{-1}$ and the error operator of \cref{eq:right-precond-identity} becomes the CPTP map $\Phi = -\KK\SS^{-1}$ of \cref{thm:steady-state-eigenvalue}, so that $\norm{\KK\RR_0^\SS} = \norm{\Phi} = 1$: the contraction is no longer strict.

    This is not a mere technicality, since $\lambda = 0$ is precisely the shift at which two of our three applications operate: the steady-state solver of \cref{sec:steady_state_linear_solve} and the shift-invert of \cref{sec:low_lying} both precondition there, albeit on the deflated operator $\widetilde\LL_\eta$ rather than on $\LL$ itself. Deflation does make the linear system nonsingular, but we have not established that it restores a strict contraction, and the peripheral eigenvalues of $\Phi$ exhibited in \cref{rk:peripheral} might prevent this. So we claim no convergence rate for these two solvers. The gap is in the theory only: in all our experiments GMRES converges in a few iterations, as the benchmarks of \cref{sec:applications} show.
\end{remark}

\subsection{Link with the resolvent formula}
\label{sec:resolvent}

The estimate of \cref{thm:contraction} has a second consequence, of a more theoretical nature: it yields an explicit series representation of the resolvent of the Lindbladian for $\lambda > 0$,
\begin{equation*}
    \RR_\lambda^\LL \coloneq (\lambda - \LL)^{-1}.
\end{equation*}

\begin{corollary}[Resolvent of the Lindbladian]\label{thm:resolvent}
    Assume $\lambda > 0$. Then $\Id - \KK \RR_\lambda^\SS$ is invertible, $\lambda$ belongs to the resolvent set of $\LL$, and $\RR_\lambda^\LL$ is given by the convergent series
    \begin{equation}\label{eq:thm-main}
        \RR_\lambda^\LL = \RR_\lambda^\SS \sum_{k=0}^\infty (\KK \RR_\lambda^\SS)^k.
    \end{equation}
\end{corollary}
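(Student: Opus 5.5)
The plan is to obtain the corollary from \cref{thm:contraction} by a Neumann series argument, followed by the factorization \cref{eq:right-precond-identity}. Fix $\lambda>0$ and set $\mathcal{N}\coloneq\KK\RR_\lambda^\SS$. By \cref{thm:contraction}, $\norm{\mathcal{N}}<1$ in the operator norm induced by the trace norm. This norm is submultiplicative on the finite-dimensional space of superoperators, which is complete. Hence $\norm{\mathcal{N}^k}\le\norm{\mathcal{N}}^k$, and the series $\sum_{k\ge0}\mathcal{N}^k$ converges absolutely. A telescoping argument, $(\Id-\mathcal{N})\sum_{k<K}\mathcal{N}^k=\Id-\mathcal{N}^K\to\Id$, and likewise on the other side, shows that $\Id-\mathcal{N}$ is invertible with inverse $\sum_{k\ge0}\mathcal{N}^k$. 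This gives the first claim.

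Next I would use the identity \cref{eq:right-precond-identity}, $(\lambda-\LL)\RR_\lambda^\SS=\Id-\mathcal{N}$, which follows from $\LL=\SS+\KK$ and the definition $\RR_\lambda^\SS=(\lambda-\SS)^{-1}$. The latter exists for $\lambda>0$ because $G_\lambda$ has spectrum in the open left half-plane, as noted around \cref{eq:expr_SS}. Multiplying this identity on the right by $(\Id-\mathcal{N})^{-1}$ gives
\begin{equation*}
(\lambda-\LL)\,\RR_\lambda^\SS(\Id-\mathcal{N})^{-1}=\Id ,
\end{equation*}
so $\RR_\lambda^\SS(\Id-\mathcal{N})^{-1}$ is a right inverse of $\lambda-\LL$. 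Since we are in finite dimension ($n^2\times n^2$ matrices), a right inverse is a two-sided inverse. Therefore $\lambda$ lies in the resolvent set of $\LL$, and $\RR_\lambda^\LL=\RR_\lambda^\SS(\Id-\mathcal{N})^{-1}$.

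Substituting the Neumann series then gives \cref{eq:thm-main}. Here $\RR_\lambda^\SS$ can be pulled inside the sum because left composition with a fixed bounded operator is continuous, so the series $\sum_k\RR_\lambda^\SS\mathcal{N}^k$ converges and equals $\RR_\lambda^\SS\sum_k\mathcal{N}^k$.

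No step is really difficult once \cref{thm:contraction} is available. The only point needing care is the passage from a one-sided inverse to invertibility of $\lambda-\LL$. Finite dimensionality handles this directly. Alternatively, one can avoid it: $\lambda-\LL$ and $\RR_\lambda^\SS$ are both invertible (the latter by definition), so $\lambda-\LL=(\Id-\mathcal{N})(\lambda-\SS)$ is a product of invertible maps, and inverting the product yields the formula.
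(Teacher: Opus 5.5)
Your proof is correct and follows the paper's argument: the Neumann series from \cref{thm:contraction} inverts $\Id-\KK\RR_\lambda^\SS$, and right-multiplying $(\lambda-\LL)\RR_\lambda^\SS=\Id-\KK\RR_\lambda^\SS$ by that inverse gives a right inverse of $\lambda-\LL$, which you correctly upgrade to a two-sided inverse by finite dimensionality (a step the paper leaves implicit). There is one slip in your closing alternative: the maps known to be invertible a priori are $\Id-\mathcal{N}$ and $\lambda-\SS$, not $\lambda-\LL$ (assuming the latter would be circular), and with that correction the factorization $\lambda-\LL=(\Id-\mathcal{N})(\lambda-\SS)$ gives the result directly, without any one-sided-inverse argument.
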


\begin{proof}
    By \cref{thm:contraction}, $\norm{\KK \RR_\lambda^\SS} < 1$, so the Neumann series $\sum_{k\geq 0} (\KK \RR_\lambda^\SS)^k$ converges, and its sum is the inverse of $\Id - \KK \RR_\lambda^\SS$. Multiplying \cref{eq:right-precond-identity} on the right by that inverse gives
    \begin{equation}\label{eq:resolvent-right-inverse}
        (\lambda - \LL)\; \RR_\lambda^\SS \p{\Id - \KK \RR_\lambda^\SS}^{-1} = \Id ,
    \end{equation}
    so $\lambda - \LL$ admits a right inverse. Hence, $\lambda$ belongs to the resolvent set of $\LL$, and $\RR_\lambda^\LL = \RR_\lambda^\SS(\Id - \KK \RR_\lambda^\SS)^{-1}$, which is \cref{eq:thm-main}.
\end{proof}

In fact, \Cref{eq:thm-main} was already proved in the 1990s in a more general infinite-dimensional framework with unbounded generators. For this more complex setting, uniqueness of the solution of the Lindblad equation is not guaranteed, and is equivalent to the so-called conservativity of the minimal semigroup. It turns out that the resolvent of this minimal semigroup admits a representation\footnote{More precisely, one usually works with the dual of the semigroup (\ie in the Heisenberg picture), thus the formula is stated for the dual and the series converges only in the strong operator topology} given by \cref{eq:thm-main}. We refer to the works \cite{chebotarevSufficientConditionsConservativity1993,chebotarevSufficientConditionsConservativity1998}.

Beyond its theoretical interest, the series \cref{eq:thm-main} is closely tied to the iterative solver of \cref{sec:precond_def}: truncating it and running a preconditioned Krylov method explore the same space, the latter selecting the residual-minimizing element of that space. We come back to this connection, and to the steady state method it motivates, in \cref{sec:steady_state_linear_solve}.

\subsection{Numerical benchmarks}
\label{sec:precond_benchmarks}

We benchmark our resolvent solver on the same two families of generators and with the same convergence criterion and hardware as in \cref{sec:benchmark_steady_state}. The linear system to solve is $(\lambda - \LL)(\rho) = b$. The right-hand side is the maximally mixed state $b = \Id/n$, and we sweep three shifts $\lambda = \lambda_{\max}\cdot 10^{\{0,-1,-2\}}$, where $\lambda_{\max}$ estimates the spectral radius of $\LL$ and is computed once, off the clock. Our solver is the right-preconditioned recycled GMRES of \cref{sec:precond_def}; the competitors are a dense $LU$ solve, MUMPS, and (un)preconditioned SciPy GMRES, all detailed in \cref{sec:details_num_benchmark}.

\begin{figure}[tb]
    \centering
    \includegraphics[width=\linewidth]{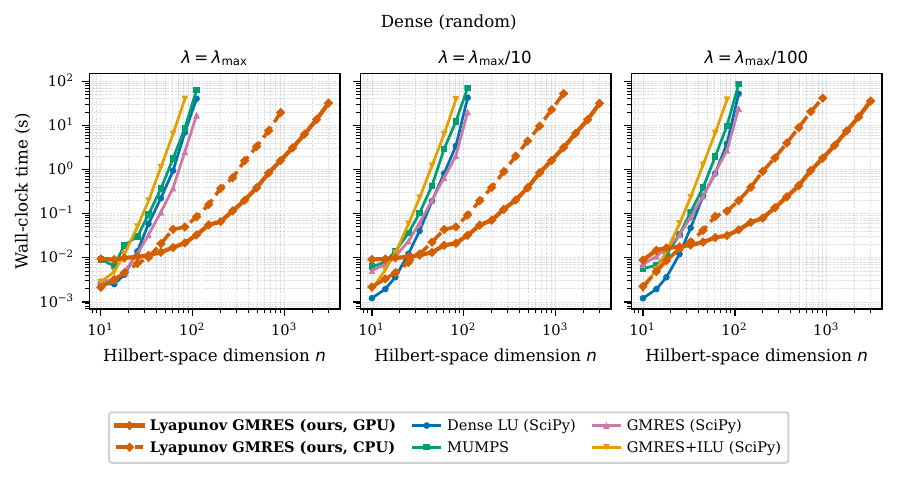}\\[4pt]
    \includegraphics[width=\linewidth]{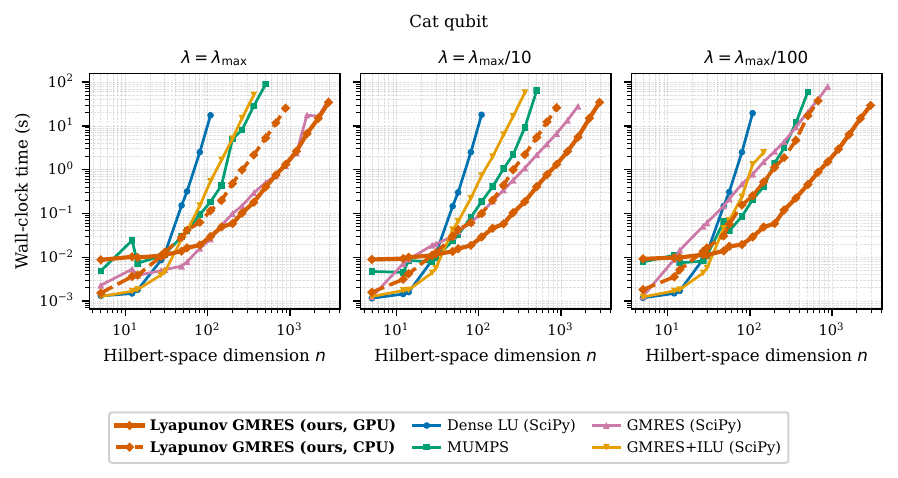}
    \caption{Runtime of resolvent solvers for $(\lambda - \LL)(\rho) = \Id/n$ versus Hilbert-space dimension $n$, at the three shifts $\lambda = \lambda_{\max},\ \lambda_{\max}/10,\ \lambda_{\max}/100$ (columns), for the dense random Lindbladian (top) and the memory--buffer cat qubit (bottom). Our preconditioned GMRES is shown on CPU (dashed) and on the H100 GPU (solid); the competitors (dense $LU$, MUMPS, GMRES, GMRES\,+\,ILU) run on the Xeon CPU. The gain over the competitors is small at $\lambda = \lambda_{\max}$ and grows as $\lambda \to 0$, and is largest for the dense family and on GPU. See \cref{sec:details_num_benchmark} for the exact configurations.}
    \label{fig:benchmark-resolvent}
\end{figure}

The results are shown in \cref{fig:benchmark-resolvent}. At the largest shift $\lambda = \lambda_{\max}$ the operator $\lambda - \LL$ is well-conditioned, so even unpreconditioned GMRES converges in a handful of iterations and our preconditioner brings little gain (no measurable speedup on the sparse cat family). As $\lambda$ decreases toward the singular limit $\lambda \to 0$, the unpreconditioned iterations degrade while $\RR_\lambda^\SS$ keeps the preconditioned spectrum clustered near $1$ (\cref{thm:contraction}), and the advantage of our method becomes clear. We gain a factor $\sim\!50$ at $\lambda_{\max}/100$ on the cat family. The gain is largest on the dense systems, where the competitors are bound by the factorization of the $n^2\times n^2$ Liouvillian and drop out near $n \approx 100$, whereas our matrix-free solver built from the $n\times n$ generator reaches $n = 3000$. These improvements are again amplified on GPU.

\section{Applications}
\label{sec:applications}

The preconditioner of \cref{sec:preconditioner} unlocks several applications. We detail three: an alternative to the fixed-point method of \cref{sec:fixed_point} for the computation of the steady state, the computation of the low-lying spectrum of the Lindbladian through a shift-invert transformation, and an implicit time integrator for stiff master equations.

\subsection{Steady state as a preconditioned linear solve}
\label{sec:steady_state_linear_solve}

Recall from \cref{sec:fixed_point} that the steady state solves the deflated linear system $\widetilde\LL_\eta\,\rho = \eta\,\Id$ of \cref{eq:steady_state_linear_problem}, equivalently its $\lambda = 0$ form $(0 - \widetilde\LL_\eta)\,\rho = -\eta\,\Id$ in the convention $(\lambda - \widetilde\LL_\eta)\rho = b$ of \cref{sec:precond_def}, which is the one we precondition. Assuming the no-jump generator $\SS$ to be invertible, we solve this system with the right-preconditioned GMRES of \cref{sec:precond_def}, taking as preconditioner the no-jump resolvent at $\lambda = 0$, into which the rank-one deflation is folded by the Sherman--Morrison formula. Refer to  \cref{eq:ShermanMorr} and \cref{alg:precond,alg:steady-state} for a precise definition. This provides an alternative to the fixed-point iteration of \cref{sec:fixed_point}, with which it is on par in \cref{fig:benchmark-steadystate}.

The series representation \cref{eq:thm-main} explains why this is natural. Assume $\SS$ is invertible and set formally $\lambda = 0$ in \cref{eq:thm-main} (the series then need not converge, since $\norm{\KK \RR_0^\SS} = \norm{\Phi} = 1$): truncating it applied to some $\rho_0$ amounts to spanning $\SS^{-1} \vspan \cb{(\KK \SS^{-1})^k \rho_0}$, which is, up to the $\SS^{-1}$ transform, exactly the Krylov space $\vspan \cb{(\LL \SS^{-1})^k \rho_0}$ obtained by right preconditioning an equation of the form $\LL \rho = \rho_0$ (indeed $\LL\SS^{-1} = \Id + \KK\SS^{-1}$, and both operators generate the same Krylov spaces). Truncating the resolvent series and running a preconditioned Krylov method therefore explore the same space; the Krylov method additionally selects the residual-minimizing element of that space, whereas the truncated series takes the one prescribed by the Neumann expansion.

\begin{remark}[Differentiating the steady state]
    \label{rk:differentiating-steady-state}
    Because the steady state is obtained as the solution of a linear system, it is moreover differentiable through the adjoint method. Write the system as $\widetilde\LL_\eta(\theta)\,\rho_\infty = \eta\,\Id$, with $\theta$ collecting the parameters of $H$ and the $L_j$, and let $\mathcal{J}(\rho_\infty)$ be a smooth scalar figure of merit. Differentiating the system implicitly gives $\nabla_\theta \mathcal{J} = -\dotprod{w}{(\partial_\theta \widetilde\LL_\eta)\,\rho_\infty}$, where the adjoint state $w$ solves the single linear system $\widetilde\LL_\eta^*\,w = \nabla_\rho \mathcal{J}$. The whole gradient thus costs one additional solve, to which the same no-jump preconditioner applies since $\lambda - \SS^*$ is again a Lyapunov operator. Crucially, one differentiates the exact solution rather than the GMRES iterations, so the cost and the accuracy of the gradient do not depend on the number of Krylov steps. This is known as implicit differentiation \cite{Krantz}.
\end{remark}

\subsection{Low-lying spectrum}
\label{sec:low_lying}

The low-lying spectrum of $\LL$ is the set of its eigenvalues of smallest magnitude; these determine the dissipative gap and the slowest relaxation rates of the system. As in \cref{sec:fixed_point}, we assume throughout this subsection that $\dim \ker \LL = 1$, and we place ourselves in the second case of \cref{thm:steady-state-eigenvalue}, in which the no-jump generator $\SS$ is invertible.

\subsubsection{State of the art}
To the best of our knowledge, the methods available for computing the low-lying spectrum fall into two categories, which we described in \cref{sec:fixed_point}.

\paragraph{Dense diagonalization.}
One forms the Liouvillian as an $n^2 \times n^2$ matrix and diagonalizes it, at a cost $\mathcal{O}(n^6)$ that quickly becomes intractable.

\paragraph{Iterative diagonalization.}
One constructs a Krylov subspace to directly find the eigenvalues of $\mathcal{L}$ with lowest magnitude. This is very inefficient, as the Arnoldi method works well for eigenvalues with largest magnitude, and typically requires reconstructing the full space.

An interesting option to make the iterative approach viable is the Arnoldi--Lindblad method of \cite{Minganti2022Feb}: rather than diagonalizing $\LL$, it applies Arnoldi to the propagator $e^{\LL t}$, whose action is evaluated by numerically integrating the master equation, without ever forming the $n^2 \times n^2$ matrix. The propagator sends the slowly-decaying modes of $\LL$ (eigenvalues of real part closest to $0$) to the eigenvalues of \emph{largest} modulus of $e^{\LL t}$, so Arnoldi naturally converges to the low-lying spectrum.

This spectral reordering plays exactly the role of the shift-invert transformation we use below, with $e^{\LL t}$ in place of the resolvent $(\mu - \LL)^{-1}$; the two differ in how the transformed operator is applied. Arnoldi--Lindblad integrates the master equation, at a cost governed by the stiffness of the dynamics, whereas shift-invert solves a linear system, which is precisely what the preconditioner of \cref{sec:preconditioner} makes inexpensive.

\subsubsection{New method based on a preconditioned shift-invert}

We propose to apply Arnoldi to the shift-inverted deflated Lindbladian $(\mu - \widetilde\LL_\eta)^{-1}$, never forming the $n^2 \times n^2$ Liouvillian: the deflation removes the zero mode, the shift-invert turns the targeted eigenvalues into the dominant ones, and the preconditioner of \cref{sec:preconditioner} makes the resulting inner linear solves rather inexpensive. We now detail these three ingredients.

\paragraph{Deflating the zero mode.}
The Lindbladian is singular: its kernel is spanned by the steady state, and the vectorized identity is a left null vector. Inverting at the shift $\mu = 0$ would thus invert a singular operator. We remove the zero mode by deflation, using the operator $\widetilde\LL_\eta(\rho) = \LL(\rho) + \eta\,\Tr(\rho)\,\Id$ of \cref{eq:steady_state_linear_problem}. Since any eigenvector of $\LL$ associated with a nonzero eigenvalue $\lambda$ is traceless (taking the trace of $\LL\rho = \lambda\rho$ gives $\lambda\Tr\rho = 0$), this leaves the nonzero eigenpairs of $\LL$ unchanged and replaces the zero eigenvalue by $\eta\, n$. The constant $\eta > 0$ is free and only fixes where the deflated mode lands: it should be taken large enough for $\eta n$ to sit well away from the eigenvalues we target, so that the shift-invert below does not turn it into a dominant one. We take $\eta = 1$ in the benchmarks, and $\eta = 10/n$ in \cref{fig:spectrum_shift_invert} to better illustrate this step. The only excluded value is $\eta = 1/\Tr\RR_\mu^\SS(\Id)$, at which the deflated preconditioner of \cref{eq:ShermanMorr} degenerates.

\begin{figure}[htb]
    \centering
    \includegraphics[width=\linewidth]{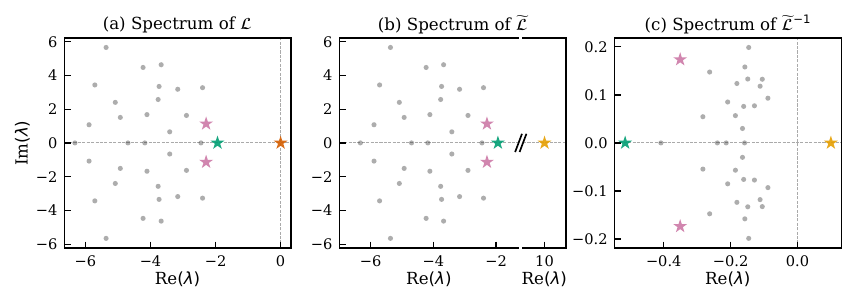}
    \caption{Random dense Lindbladian (\cref{app:dense_gen}) spectra for a Hilbert space of size $n=6$.
        (a)~Full spectrum of $\mathcal{L}$.
        (b)~Spectrum of $\widetilde\LL_\eta = \LL + \eta\,\Tr(\cdot)\,\Id$ with $\eta = 10/n$, which shifts the steady-state eigenvalue to $\eta n = 10$ (orange star) while leaving all nonzero modes unchanged.
        (c)~Spectrum of $\widetilde\LL_\eta^{-1}$: the low-lying modes of $\widetilde\LL_\eta$ (small $|\lambda|$) become its dominant eigenvalues (large $|1/\lambda|$). The operator actually iterated in the text is $(\mu - \widetilde\LL_\eta)^{-1}$, which at $\mu = 0$ differs from $\widetilde\LL_\eta^{-1}$ by an overall sign.
        In all panels, the red star marks the steady state, the green star the slowest-decaying nonzero mode, and the purple stars the next conjugate pair.
        The orange star in the second and third panels~(b--c) marks the deflated steady-state eigenvalue.}
    \label{fig:spectrum_shift_invert}
\end{figure}

\paragraph{Shift-invert Arnoldi.}
The targeted eigenvalues sit near $0$, where Arnoldi applied to $\widetilde\LL_\eta$ directly converges slowly, since Arnoldi resolves the eigenvalues of largest magnitude first. We therefore apply it to the shift-inverted operator $(\mu - \widetilde\LL_\eta)^{-1}$. This operator has the same eigenvectors as $\widetilde\LL_\eta$, and an eigenvalue $\lambda$ of $\widetilde\LL_\eta$ is sent to the eigenvalue $\nu = 1/(\mu - \lambda)$. The eigenvalues $\lambda$ closest to the shift $\mu$ thus become the $\nu$ of largest modulus, which Arnoldi resolves first; each is mapped back through $\lambda = \mu - 1/\nu$. Taking $\mu = 0$ targets the smallest-magnitude nonzero eigenvalues of $\LL$.

\paragraph{The inner linear solves.}
Each Arnoldi step applies the shift-inverted operator once, i.e.\ solves a single linear system in $\mu - \widetilde\LL_\eta$, which we do with the right-preconditioned GMRES of \cref{sec:precond_def}. The preconditioner is the no-jump resolvent $\RR_\mu^\SS$, corrected for the deflation: $\widetilde\LL_\eta$ differs from $\LL$ by the rank-one term $\eta\,\ket{\Id}\!\bra{\Id}$, where $\bra{\Id}\rho\rangle = \Tr\rho$, and the Sherman--Morrison formula \cref{eq:ShermanMorr} folds that term into the preconditioner exactly, at the price of one trace and one extra Lyapunov solve. The latter being independent of the right-hand side, it is computed once and reused throughout the Arnoldi run (\cref{alg:precond}). Successive Arnoldi steps moreover solve nearby systems: rather than restarting each solve from scratch, we recycle the Krylov subspace and warm-start it with the previous solution \cite{Parks2006Oct,Soodhalter2014Jul}. The full method is given in \cref{alg:low-lying-spectrum} of \cref{sec:linear_solver}.

\subsubsection{Numerical benchmarks}

We benchmark on the same two families as in \cref{sec:benchmark_steady_state}: the dense random Lindbladian and the memory--buffer cat qubit described in \cref{sec:cat}. Because our shift-invert Arnoldi never forms the $n^2 \times n^2$ Liouvillian, it computes the low-lying spectrum of systems roughly one order of magnitude larger than the SciPy baselines, which both form the $n^2 \times n^2$ Liouvillian and run out of memory beyond a few hundred basis states (\cref{fig:benchmark-lowlying-spectrum}). This reach lets us resolve the low-lying spectrum of the cat qubit across cat sizes (\cref{fig:spectrum-eigenvalues-cat}), recovering its characteristic noise bias: an exponentially suppressed bit-flip rate together with linearly growing phase-flip rates, the exact feature that makes the cat qubit attractive for quantum error correction.

\begin{figure}[tb]
    \centering
    \includegraphics[width=\linewidth]{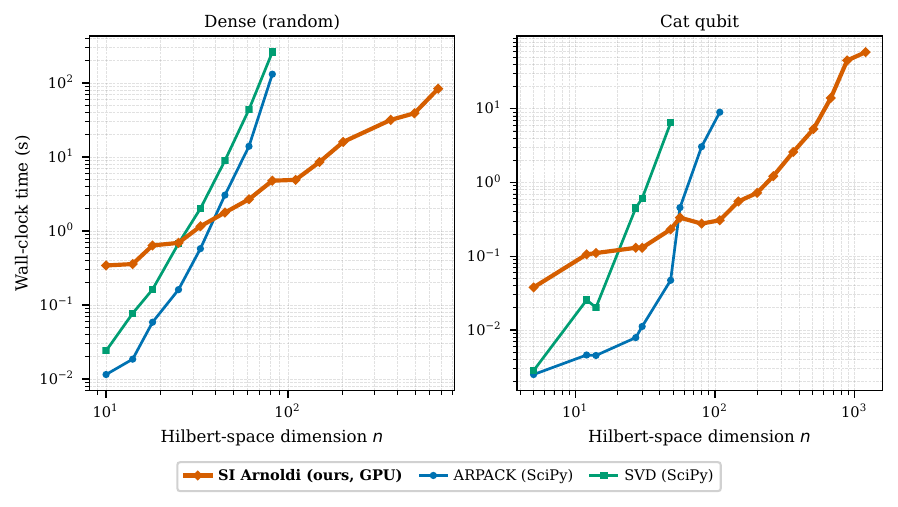}
    \caption{Wall-clock time to compute the low-lying spectrum as a function of the Hilbert-space dimension $n$ (log--log), for the dense random Lindbladian (left, \cref{app:dense_gen}) and the memory--buffer cat qubit (right, \cref{sec:cat}). Our shift-invert Arnoldi solver (SI Arnoldi, ours) is compared against the dense eigenvalues solver of NumPy and the ARPACK shift-invert solver of SciPy, both applied to the $n^2 \times n^2$ Liouvillian. See \cref{sec:details_num_benchmark} for the exact configurations.}
    \label{fig:benchmark-lowlying-spectrum}
\end{figure}

\begin{figure}[tb]
    \centering
    \includegraphics[width=0.6\linewidth]{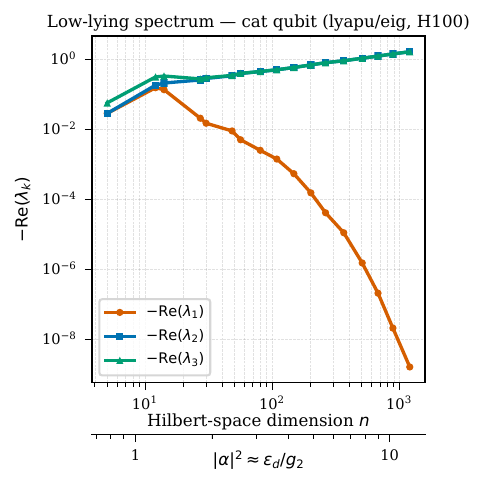}
    \caption{Low-lying spectrum of the memory--buffer cat qubit (\cref{sec:cat}) computed with our solver: the three smallest nonzero decay rates $-\Re(\lambda_k)$ against the Hilbert-space dimension $n$ (lower axis: the corresponding cat size $|\alpha|^2 \approx \varepsilon_d/g_2$). The cat code space is two-dimensional, so under ideal two-photon stabilization the four operators acting on it would all be stationary; the residual single-photon loss lifts this degeneracy into a unique steady state and three slow modes. The slowest, $\lambda_1$, is the bit-flip, whose rate is exponentially suppressed in $|\alpha|^2$; the other two, $\lambda_2$ and $\lambda_3$, are phase-flip modes whose rates grow linearly in $|\alpha|^2$, see \eg \cite{PhysRevA.111.012617}.}
    \label{fig:spectrum-eigenvalues-cat}
\end{figure}

\subsection{Implicit solver}
\label{sec:implicit}

\begin{figure}[tb]
    \centering
    \includegraphics[width=\linewidth]{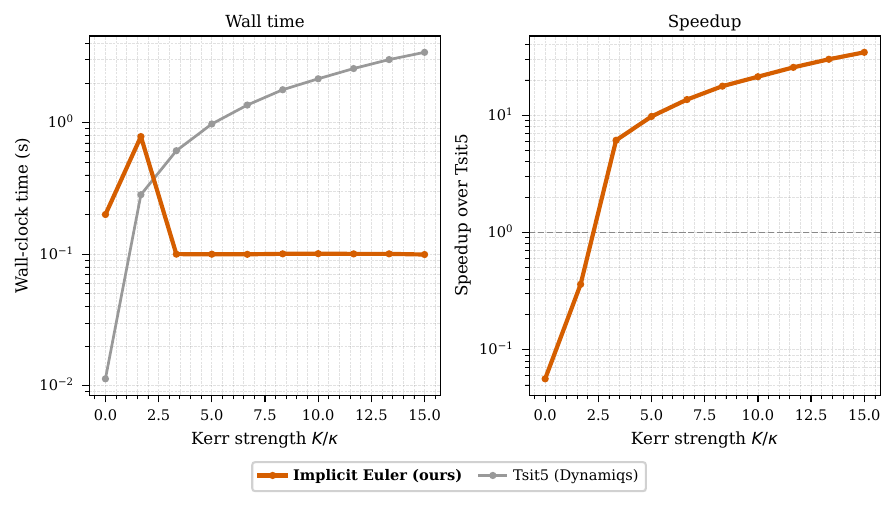}
    \caption{Comparison of our implicit Euler scheme with a generic explicit integrator (Tsit5) from dynamiqs \cite{guilmin2025dynamiqs}. On stiffer systems with high Kerr, the implicit Euler scheme takes fewer steps, resulting in a speedup with respect to the explicit integrator. }
    \label{fig:benchmark-ie}
\end{figure}

Integrating the master equation $\dot{\rho}=\LL\rho$ in high-dimensional systems is often hampered by stiffness: explicit integrators (\eg Runge--Kutta) require a time step $\Delta t$ small enough to resolve the fastest modes, and this restriction typically tightens with the truncation size when simulating truncated approximations of infinite-dimensional, unbounded Lindblad dynamics (see \cite[Section 4]{robin2025unconditionally}). By contrast, implicit schemes remain stable for much larger $\Delta t$ and are therefore better suited to stiff problems.

Implicit methods have seen limited application to Lindblad equations because each time step entails solving an $n^2\times n^2$ linear system. The preconditioner of \cref{sec:preconditioner} makes these linear solves practical even for large, dense systems, enabling efficient implicit time stepping. For the implicit Euler scheme, the step to be inverted reads
\begin{equation}\label{eq:resolvent}
    \rho_{t+\Delta t} = \rho_t + \Delta t \LL \rho_{t+\Delta t} \implies \rho_{t+\Delta t} = (\Id - \Delta t \LL)^{-1} \rho_t = \frac{1}{\Delta t} \RR_{1/\Delta t}^\LL \rho_t.
\end{equation}
This requires computing the action of the resolvent at each time step. We do this by preconditioning the linear system in \cref{eq:resolvent} with $\RR_\lambda^\SS$ at the shift $\lambda = 1/\Delta t$, computed exactly with the same machinery as for the steady state in \cref{sec:steady_state_linear_solve}. This is the one application in which the shift is strictly positive, so that \cref{thm:contraction} applies and \cref{rk:gmres-rate} guarantees a geometric convergence rate for the inner solves.

\paragraph{Benchmarks.}
We consider the random Lindbladian system introduced in \cref{sec:systems_studied}. We add a diagonal quartic term $K\,N^2$, with $N = \mathrm{diag}(0,1,\dots,n-1)$, to its Hamiltonian to have a knob on the stiffness of the resulting ODE. The step size of the implicit Euler scheme is chosen so that all the states of the simulation are close enough to a reference (expensive) simulation $\rho_\star$. That is, we choose $\Delta t$ s.t.
\begin{equation*}
    \mathcal{F}(\rho_\mathrm{IE}(k \Delta t), \rho_\star(k \Delta t)) \geq 0.99,
\end{equation*}
with $\mathcal{F}$ the fidelity between two density matrices.

Results are presented in \cref{fig:benchmark-ie}. We see that the implicit Euler scheme requires a roughly constant number of steps regardless of the Kerr strength. This results in a large speedup over the explicit integrator as soon as the system is stiff enough. Further work would be required to make this a competitive solver in practice. Notably, adaptive step sizing as well as higher-order methods would be key.

\section{Conclusion and perspectives}
\label{sec:conclusion}

The methods of this paper all rest on a single observation: the Lindbladian splits as
$\LL = \SS + \KK$, and the no-jump part $\SS$, however ill-conditioned $\LL$ may be, can be inverted
exactly at cost $O(n^3)$ by solving a continuous Lyapunov equation. From it we built two objects:
the CPTP map $\Phi = -\KK\SS^{-1}$, whose fixed point yields the steady state and whose eigenvalue
$1$ is simple and of maximal modulus (\cref{thm:steady-state-eigenvalue}); and the no-jump resolvent
$\RR_\lambda^\SS$, a right preconditioner for $\lambda - \LL$ whose error operator is a strict
contraction for every $\lambda > 0$ (\cref{thm:contraction}), which yields both an explicit series
for the resolvent of the Lindbladian (\cref{thm:resolvent}) and a geometric convergence rate for the
preconditioned GMRES (\cref{rk:gmres-rate}). Three distinct tasks then reduce to the same building
block: computing the steady state, computing the low-lying spectrum by shift-invert Arnoldi, and
integrating stiff dynamics implicitly. All of them manipulate $n \times n$ matrices only, never
forming the $n^2 \times n^2$ Liouvillian, and reduce to dense matrix products. This is why we measure large
gains, further amplified on GPU.

Several directions remain open.
    \paragraph{Choosing the jump/no-jump split.}
    We have not leveraged the gauge freedom of \cref{rk:gauge} in the present work, and we believe that an optimization on this additional degree of freedom should lead to quantitative improvements. We leave it to future work. Let us nonetheless give two theoretical remarks. First, assuming the number of jump operators $d$ to be at least one, all gauge changes but a finite number make $\tilde\SS$ invertible:
    retracing the proof of \cref{thm:steady-state-eigenvalue}, $\tilde\SS$ is singular only if some
    $\psi \neq 0$ satisfies $L_j\psi = -c_j\psi$ for every $j$. Second, the peripheral eigenvalues of \cref{rk:peripheral} are not intrinsic to $\LL$ but to
    the split: on that example the shift $c = (1,0,0)$ leaves $1$ simple and brings every other eigenvalue of $\tilde\Phi$ to modulus at most $\sqrt{3/8}$, the pair $e^{\pm 2i\pi/3}$ becoming  $\frac12 \pm \frac{i}{2\sqrt2}$.
\paragraph{Beyond first order in time.}
The implicit integrator of \cref{sec:implicit} is a proof of concept. Adaptive step-size selection
and higher-order multistep schemes of backward differentiation formula (BDF) type would be needed to make it practical, and both fit
the present machinery well: a BDF step requires inverting $\Id - \beta\,\Delta t\,\LL$ for a
scheme-dependent constant $\beta$, that is, up to the prefactor $1/(\beta\,\Delta t)$, the same resolvent at the shift
$\lambda = 1/(\beta\,\Delta t)$. Changing the order or the step size only moves that shift, which
enters our Lyapunov backend through an element-wise division (\cref{alg:lyapu-eig}); the
eigendecomposition of $G$ is computed once and reused throughout.

\section*{Acknowledgements}

We warmly thank Pierre Guilmin, Pierre Rouchon, Alain Sarlette, and Hector Hutin for helpful
discussions and comments on this work.

 This project has received funding from the European Research Council (ERC) from the QFT.zip
project (grant agreement No. 101040260) and Plan France 2030 through the project ANR-22-PETQ-0006.

\newpage
\appendix
\section{Numerical implementation}
\label{sec:num_implementation}

All three solvers of the main text rely on the same tool: the exact inversion of the
no-jump part $\SS$, which amounts to solving a continuous Lyapunov equation. We first
describe how we solve it (\cref{sec:continuous_lyapunov}), then how the steady-state and
low-lying-spectrum solvers are built on top of it (\cref{sec:linear_solver}). From this exact inversion, we introduce two algorithms, which the three solvers leverage: the no-jump resolvent $\RR_\lambda^\SS$
(\cref{alg:lyapu-eig}) and the right preconditioner it induces (\cref{alg:precond}).

\subsection{Solving the continuous Lyapunov equation}
\label{sec:continuous_lyapunov}

We detail the standard methods to solve the continuous Lyapunov equation $\SS(X) = GX + XG^\dag = Y$,
and their shifted variant $(\lambda - \SS)(X) = Y$ that yields the resolvent
$\RR_\lambda^\SS = (\lambda - \SS)^{-1}$ used as a preconditioner.

\subsubsection{Bartels--Stewart algorithm}

The standard algorithm is the Bartels--Stewart algorithm \cite{bartels1972algorithm,golub1979hessenberg}.
It computes the Schur decomposition $G = U T U^\dagger$ with $T$ upper triangular, transforming
the equation into $T \tilde X + \tilde X T^\dagger = \tilde Y$ with $\tilde X = U^\dagger X U$,
$\tilde Y = U^\dagger Y U$, which is solved by back-substitution. When $Y$ is Hermitian, so is $X$,
halving the cost. The shift is applied on the triangular factor, $T \to T - \tfrac{\lambda}{2}\Id$,
which realizes $-\RR_\lambda^\SS$. The back-substitution is sequential and hard to parallelize on
GPUs, and can suffer from stability issues; a mixed-precision refinement mitigates the
latter \cite{dmytryshyn2025mixed}.

\subsubsection{Eigendecomposition method}

Alternatively, when $G$ is diagonalizable, we use its eigendecomposition
$G = U \Sigma U^{-1}$, $\Sigma = \mathrm{diag}(\sigma_1, \dots, \sigma_n)$. Writing
$V = U^{-\dagger}$ and $\hat X = V^\dagger X V$, $\hat Y = V^\dagger Y V$, the shifted equation
decouples element-wise,
\begin{equation}
    (\lambda - \sigma_i - \sigma_j^*)\, \hat X_{ij} = \hat Y_{ij}
    \implies
    \hat X_{ij} = \frac{\hat Y_{ij}}{\lambda - \sigma_i - \sigma_j^*},
\end{equation}
so that
\begin{equation}\label{eq:lyapu_eig_solution}
    \RR_\lambda^\SS(Y)
    = U \left[ \left(\frac{1}{\lambda - \sigma_i - \sigma_j^*} \right)_{ij}
        \circ \big(V^\dagger Y V\big) \right] U^\dagger ,
\end{equation}
with $\lambda = 0$ recovering $-\SS^{-1}$. Since $\Re(\sigma_i) < 0$ under the assumption that
$\SS$ is invertible, the denominators never vanish for $\lambda \ge 0$.

In other words, this exhibits a basis in which the superoperator $\SS$ is diagonal. In particular, this provides a fast, matrix-matrix implementation of the inverse.

\subsubsection{Comparison and implementation}

Both methods scale as $O(n^3)$. The Bartels--Stewart algorithm has the smaller leading constant, but once computed, the
eigendecomposition route requires matrix--matrix products (GEMM), which are
extremely optimized on GPUs.
The Bartels--Stewart variant, in contrast, relies on a Schur factorization and a subsequent back substitution. We did not find an efficient implementation of the latter on GPU.
Because our workflow solves the Lyapunov equation repeatedly inside a Krylov solver, we
precompute the eigendecomposition once and reuse it across all solves. The per-solve cost is
four dense $n\times n$ products and one element-wise division.

\begin{algorithm}[htbp]
    \caption{No-jump resolvent $\RR_\lambda^\SS$ via eigendecomposition}
    \label{alg:lyapu-eig}
    \KwIn{generator $G = -iH - \tfrac12\sum_k L_k^\dagger L_k$, shift $\lambda \ge 0$, operator $Y$}
    \KwOut{$X = \RR_\lambda^\SS(Y) = (\lambda - \SS)^{-1}(Y)$}
    \textbf{Precompute (once):} eigendecomposition $G = U\Sigma U^{-1}$; store $U$ and $V \gets U^{-\dagger}$
    \tcp*{$O(n^3)$, general eigensolver}
    \textbf{Apply:} $X \gets U\left[\dfrac{(V^\dagger Y V)_{ij}}{\lambda - \sigma_i - \sigma_j^*}\right]U^\dagger$
    \tcp*{$O(n^3)$, GEMM-bound}
\end{algorithm}

\subsection{Linear solver implementation}
\label{sec:linear_solver}

\paragraph{GMRES.}
Every linear system we solve is a shifted, possibly deflated Lindbladian preconditioned on the
right by the no-jump resolvent of \cref{alg:lyapu-eig}. Our GMRES is a custom, matrix-free
implementation in JAX \cite{jax2018github}: it never forms the $n^2\times n^2$ Liouvillian and applies $\LL$, $\RR_\lambda^\SS$
and the preconditioner as functions on $n\times n$ matrices. It is a right-preconditioned
restarted GMRES with Krylov subspace recycling: after each cycle we retain a small
harmonic-Ritz subspace $(U,C)$ with $\LL\,\RR_\lambda^\SS\,U = C$, which is passed to the next
cycle and across successive solves, and each solve is warm-started from the previous iterate. The
least-squares problem is solved by a QR factorization of the augmented Hessenberg matrix, and the
Arnoldi basis is fully re-orthogonalized. For the
steady state the stopping criterion is the size-independent $\norm{\LL(\hat\rho)}_{\max} < \varepsilon$ on the
hermitized and trace-normalized state $\hat\rho$.

Written in the form $\lambda - \widetilde\LL_\eta = (\lambda - \LL) - \eta \,\ket{\Id}\!\bra{\Id}$ of \cref{sec:precond_def}, the two deflated
systems of \cref{sec:fixed_point,sec:low_lying} carry the rank-one term $-\eta\,\Tr(\cdot)\,\Id = -\eta\,\ket{\Id}\!\bra{\Id}$,
in addition to the no-jump part. We add it into the preconditioner exactly with the
Sherman--Morrison formula, written here for the shift $\lambda = 0$ (recall $\RR_0^\SS = (-\SS)^{-1}$):
    \begin{equation}\label{eq:ShermanMorr}
        \big({-\SS} - \eta\,\ket{\Id}\!\bra{\Id}\big)^{-1}(X)
        = \RR_0^\SS(X)
        + \frac{\eta\,\Tr\!\big(\RR_0^\SS(X)\big)}{1 - \eta\,\Tr\!\big(\RR_0^\SS(\Id)\big)}\,\RR_0^\SS(\Id),
    \end{equation}
    which is defined as soon as $\eta\,\Tr\!\big(\RR_0^\SS(\Id)\big) \neq 1$, the value at which the deflated
    no-jump operator ${-\SS} - \eta\,\ket{\Id}\!\bra{\Id}$ is itself singular. Here $\RR_0^\SS(\Id)$ and its
    trace are precomputed once, so the correction costs only one extra base solve and a trace, and the
    check above comes for free. It excludes a single value of $\eta$, far from the ones we use:
$\Tr \RR_0^\SS(\Id) = \int_0^\infty \Tr\big(e^{tG}e^{tG^\dag}\big)\dd t$ is of the order of the dimension
    times a decay time, so its inverse is small compared to $\eta = 1$. Written for a general shift $\lambda$, i.e. replacing $-\SS$ by $\lambda-\SS$ and $\RR_0^\SS$ by $\RR_\lambda^\SS$, this defines the
    preconditioner \texttt{Precond} of \cref{alg:precond}.

    \begin{algorithm}[htbp]
    \caption{Deflated no-jump right preconditioner (\texttt{Precond})}
    \label{alg:precond}
    \KwIn{operator $X$, shift $\lambda$, deflation coefficient $c$}
    \KwOut{$P_\lambda(X)$, an approximation of the deflated $\big((\lambda-\LL) - c\,\ket{\Id}\!\bra{\Id}\big)^{-1}(X)$}
$Z \gets \RR_\lambda^\SS(X)$ \tcp*{\cref{alg:lyapu-eig}}
$W \gets \RR_\lambda^\SS(\Id)$ \tcp*{precomputed once}
    \Return $Z + \dfrac{c\,\Tr(Z)}{1 - c\,\Tr(W)}\,W$ \tcp*{Sherman--Morrison, \cref{eq:ShermanMorr}}
    \end{algorithm}

    We now assemble the three solvers. The steady state can be obtained either as the fixed point of
$\Phi$ (\cref{alg:steady-state-arnoldi}) or as the solution of the deflated linear system
    (\cref{alg:steady-state}); the low-lying spectrum uses shift-invert Arnoldi
    (\cref{alg:low-lying-spectrum}).

    \begin{algorithm}[htbp]
    \caption{Steady state via preconditioned GMRES}
    \label{alg:steady-state}
    \KwIn{$H$, $\{L_k\}$, deflation $\eta$, tolerance $\varepsilon$, Krylov size $m$}
    \KwOut{$\hat\rho$ with $\|\LL(\hat\rho)\|_{\max} \leq \varepsilon$, $\Tr\hat\rho = 1$}
    Build $\RR_\lambda^\SS$ (\cref{alg:lyapu-eig}) and $P_0 \equiv \texttt{Precond}(\cdot,\lambda{=}0,c{=}\eta)$ (\cref{alg:precond})\;
    Solve $\widetilde\LL_\eta\,\rho = \eta\,\Id$ (\cref{eq:steady_state_linear_problem}) by right-preconditioned GMRES($m$):\;
    \quad recycle the Krylov subspace across restarts; warm-start from $\ket{0}\!\bra{0}$\;
    \quad each cycle: $\hat\rho \gets \tfrac12(\rho + \rho^\dagger)$, $\hat\rho \gets \hat\rho / \Tr\hat\rho$;\ stop if $\norm{\LL(\hat\rho)}_{\max} < \varepsilon$\;
    \Return $\hat\rho$
    \end{algorithm}

    \paragraph{Arnoldi.}
    The steady state also arises as the fixed point of the CPTP map $\Phi = -\KK\SS^{-1}$
    (\cref{thm:steady-state-eigenvalue}), whose dominant eigenvalue is $1$. We find it by an
    explicitly-restarted Arnoldi iteration on $\Phi$ (\cref{alg:steady-state-arnoldi}): each cycle
    builds a Krylov subspace of $\Phi$, keeps the Ritz vector whose Ritz value is closest to $1$, and
    restarts from it, tracking the running-best iterate. As for the linear solver, convergence is
    tested in batches on the reconstructed state.

    \begin{algorithm}[htbp]
    \caption{Steady state as the fixed point of $\Phi$}
    \label{alg:steady-state-arnoldi}
    \KwIn{$H$, $\{L_k\}$, tolerance $\varepsilon$, Krylov size $m$, max restarts $N$}
    \KwOut{$\hat\rho$ with $\LL(\hat\rho) \approx 0$, $\Tr\hat\rho = 1$}
    Build $\SS^{-1} = -\RR_0^\SS$ (\cref{alg:lyapu-eig}); define $\Phi(\xi) = -\KK(\SS^{-1}\xi)$\;
    \For{$j = 1, \dots, N$}{
    Run one Arnoldi($m$) cycle on $\Phi$; let $\xi$ be the Ritz vector with Ritz value nearest $1$\;
$\rho \gets -\SS^{-1}(\xi)$;\ $\hat\rho \gets \tfrac12(\rho+\rho^\dagger)$;\ $\hat\rho \gets \hat\rho/\Tr\hat\rho$\;
    \If{$\norm{\LL(\hat\rho)}_{\max} < \varepsilon$}{\Return $\hat\rho$}
    restart from $\xi$\;
    }
    \Return running-best $\hat\rho$
    \end{algorithm}

    The low-lying spectrum is obtained by shift-invert Arnoldi on the deflated Lindbladian
$\widetilde\LL_{\eta}$ (\cref{alg:low-lying-spectrum}). We iterate on
$(\mu - \widetilde\LL_{\eta})^{-1}$, which maps the eigenvalues of $\LL$ nearest $\mu$ to
    the dominant eigenvalues $\nu$, recovered by $\lambda = \mu - 1/\nu$; taking $\mu = 0$
    targets the smallest-magnitude nonzero modes. Each Arnoldi step applies this operator once, i.e.\
    solves one linear system with the preconditioned recycled GMRES above. We use a thick-restart
    (Krylov--Schur) scheme, keeping the best Ritz pairs between cycles, and thread the inner GMRES
    recycling subspace through every matvec. The residual is measured directly on the true
    Lindbladian, $\max_k \norm{\LL v_k - \lambda_k v_k}_{\max} < \varepsilon$, on the
    raw eigenvectors (which are traceless and are \emph{not} hermitized or normalized).

    We also tried the default ARPACK algorithm of Scipy (\texttt{scipy.sparse.linalg.eigs}), but as it does not benefit from end-to-end JAX compilation, it is slower by a constant factor.

    \begin{algorithm}[htbp]
    \caption{Low-lying spectrum via shift-invert Arnoldi}
    \label{alg:low-lying-spectrum}
    \KwIn{$H$, $\{L_k\}$, shift $\mu$, deflation $\eta$, number of modes $k$, tolerance $\varepsilon$, Krylov size $m$, Ritz pairs to keep $p$}
    \KwOut{$k$ smallest-magnitude nonzero eigenpairs $(\lambda_k, v_k)$ of $\LL$}
    Build $\RR_\mu^\SS$ (\cref{alg:lyapu-eig}) and $P_\mu \equiv \texttt{Precond}(\cdot,\mu,\eta)$ (\cref{alg:precond})\;
    Define $\mathrm{OP}(b)$: solve $(\mu - \widetilde\LL_{\eta})\,x = b$ by right-preconditioned recycled GMRES\;
    \Repeat{$\max_k \norm{\LL v_k - \lambda_k v_k}_{\max} < \varepsilon$ \emph{or} max restarts}{
    Extend the Arnoldi factorization of $\mathrm{OP}$ to size $m$ (recycling threaded through each matvec)\;
    Ritz values $\nu$ of $\mathrm{OP}$ $\to$ eigenvalues $\lambda = \mu - 1/\nu$ of $\LL$\;
    Thick-restart: keep the $p$ best Ritz pairs\;
    }
    \Return the $k$ eigenpairs of smallest $|\lambda|$
    \end{algorithm}

    \section{Benchmarks}
    \label{sec:details_num_benchmark}

    \subsection{Description of the system}

    \paragraph{Cat qubit (sparse).}
    \label{sec:cat}
    The sparse benchmark models cat-state stabilization in a memory--buffer architecture. With
$\{a,a^\dagger\}$ and $\{b,b^\dagger\}$ the annihilation/creation operators of the memory and
    buffer modes, the dynamics is set by the Hamiltonian
    \begin{equation}
        H = g_2\left(a^2 b^\dagger + a^{\dagger 2} b\right) - \varepsilon_d \left(b + b^\dagger\right),
    \end{equation}
    and the four jump operators
    \begin{equation}
        \begin{aligned}
            L_1 & = \sqrt{\kappa_b(1+\bar n_b)}\, b, & \quad L_2 & = \sqrt{\kappa_b\, \bar n_b}\, b^\dagger,  \\
            L_3 & = \sqrt{\kappa_a(1+\bar n_a)}\, a, & \quad L_4 & = \sqrt{\kappa_a\, \bar n_a}\, a^\dagger ,
        \end{aligned}
    \end{equation}
    where $\bar n_a$ and $\bar n_b$ denote the thermal occupations of the two modes. This captures
    two-photon exchange between the memory and the lossy buffer together with thermal
    excitation and relaxation in both modes (no Kerr term).

    The resulting Lindbladian is sparse, has a
    small spectral gap that shrinks with cat size, and admits a unique steady state, so $\SS$ is
    invertible. The physical parameters are taken from \cite{QuantumControlCatQubits} and listed in
    \cref{tab:memory_buffer_params}.

    \begin{table}[h]
    \centering
    \caption{Parameters of the memory--buffer Lindbladian used in the benchmarks.}
    \label{tab:memory_buffer_params}
    \begin{tabular}{lc}
    \hline
    Parameter           & Value                \\
    \hline
$g_2/2\pi$          & $0.763~\mathrm{MHz}$ \\
$\kappa_b/2\pi$     & $2.6~\mathrm{MHz}$   \\
$\kappa_a/2\pi$     & $9.3~\mathrm{kHz}$   \\
$\bar n_a$          & $10\%$               \\
$\bar n_b$          & $1.1\%$              \\
    \hline
    \end{tabular}
    \end{table}

The creation and annihilation operators \emph{should} verify $[a,a^\dagger]=1$, $[b,b^\dagger] = 1$, and $[a,b]=[a,b^\dagger]=0$, an algebra that can only be represented in infinite dimension. In practice, for numerical simulations, one truncates $a$ to $n_a$ levels and $b$ to $n_b$ levels. Namely, one takes $a := \tilde{a}\otimes \Id_{n_b}$ and $b := \Id_{n_a} \otimes \tilde{b}$
\[
\tilde{a} \;=\;
\begin{pmatrix}
0 & \sqrt{1} & & & \\
 & 0 & \sqrt{2} & & \\
 & & \ddots & \ddots & \\
 & & & 0 & \sqrt{n_a-1} \\
 & & & & 0
\end{pmatrix}
\in \mathbb{C}^{n_a \times n_a},
~~
\tilde{b} \;=\;
\begin{pmatrix}
0 & \sqrt{1} & & & \\
 & 0 & \sqrt{2} & & \\
 & & \ddots & \ddots & \\
 & & & 0 & \sqrt{n_b-1} \\
 & & & & 0
\end{pmatrix}
\in \mathbb{C}^{n_b \times n_b}.
\]
The effective Hilbert-space dimension is $n = n_a\, n_b$, with $n_a$ and $n_b$ the memory and
    buffer truncations. As we scale $n$, we set the buffer drive $\varepsilon_d = g_2 n_a/5$ so that
    the target cat amplitude $\alpha^2 = \varepsilon_d/g_2 = n_a/5$ stays well within the truncated
    memory space, and choose $n_a = \mathrm{round}(\sqrt{3n})$, $n_b = \lfloor n_a/3\rfloor$. This
    keeps the physically populated Fock states resolved while avoiding truncation artifacts.

    \paragraph{Random (dense).}
    \label{app:dense_gen}
    The dense benchmark uses generic, unstructured generators. We draw matrices
$A_0,\dots,A_3 \in \CC^{n\times n}$ whose real and imaginary parts are i.i.d.\ standard normal,
    and set the Hermitian Hamiltonian and three jump operators
    \begin{equation}
        H = \frac{A_0 + A_0^\dagger}{2}, \qquad L_j = \sqrt{\gamma}\, A_j \quad (j=1,2,3), \qquad \gamma = 0.1 .
    \end{equation}
    A single random stream is drawn in increasing $n$ so the systems are reproducible. All
    generators are dense.

    \subsection{Description of the competing methods}

    We compare against established solvers for each of the three tasks; all run in double
    precision, with convergence declared when $\norm{\LL(\rho)}_{\max} < 10^{-8}$ on the hermitized,
    trace-one state, and each run capped at $100\,\mathrm{s}$. Our own methods run on an Nvidia H100 GPU; the CPU competitors run on an Intel Xeon Platinum 8481C.

    \paragraph{Steady state (\cref{fig:benchmark-steadystate}).}

    \emph{Dense SVD} --- with QuTiP \texttt{steadystate(method='svd')}: forms the dense
$n^2\times n^2$ Liouvillian and solves via SVD. It's exact up to machine precision and complexity scales as $O(n^6)$. Runs on CPU.

    \emph{GMRES\,+\,ILU} --- QuTiP \texttt{method='direct', solver='gmres', use\_precond=True}
    with $\texttt{atol}=\texttt{rtol}=10^{-10}$. This uses SciPy's GMRES with an incomplete-LU left-preconditioner at
    its default fill/drop settings. This is an iterative running on CPU.

    \emph{MUMPS} --- multifrontal sparse $LU$ \cite{MUMPS:1,MUMPS:2} on the singular Liouvillian made
    regular by trace-pinning (row $0$ replaced by the trace constraint, right-hand side a weighted
$e_0$), using column-major vectorization. We use the default MUMPS options. This is a direct, CPU-only method.

    \emph{GMRES (Krylov.jl)} --- Julia \texttt{LinearSolve.jl}/\texttt{KrylovJL\_GMRES} with
$\texttt{reltol}=10^{-8}$ on the same trace-pinned sparse system. Again, an unpreconditioned, iterative, CPU based method.

    \emph{Time evolution} --- Dynamiqs \texttt{mesolve} with the explicit adaptive Tsit5 integrator
    (diffrax), $\texttt{rtol}=\texttt{atol}=10^{-12}$, integrated from $\ket{0}\!\bra{0}$ to a fixed
    horizon ($50\,\mu\mathrm{s}$ dense, $2000\,\mu\mathrm{s}$ cat) after which the residual is checked; runs on GPU.

    Our two entries are the fixed-point Arnoldi on $\Phi$ (\cref{alg:steady-state-arnoldi}) and the
    preconditioned GMRES linear solver (\cref{alg:steady-state}), both with the eigendecomposition
    Lyapunov backend.

    \paragraph{Low-lying spectrum (\cref{fig:benchmark-lowlying-spectrum}).}

    \emph{NumPy dense} --- \texttt{numpy.linalg.eig} of the dense $n^2\times n^2$ Liouvillian, keeping
    the four eigenvalues of smallest magnitude (which include the steady state); exact, $O(n^6)$.

    \emph{ARPACK} --- \texttt{scipy.sparse.linalg.eigs} in shift-invert mode with
$\texttt{k}=4$, $\texttt{sigma}=-10^{-6}$, $\texttt{which='LM'}$, $\texttt{tol}=10^{-6}$; the tiny
    nonzero shift keeps the $LU$ factorization off the singularity of $\LL$; no deflation, no
    preconditioner. Both baselines return four eigenvalues including the steady state, whereas our
    shift-invert Arnoldi (\cref{alg:low-lying-spectrum}) returns the three smallest nonzero modes; convergence is the eigenpair residual to fall below $10^{-4}$ in max-norm.

    \paragraph{Time evolution (\cref{fig:benchmark-ie}).}
    The competitor is again Dynamiqs with the default Tsit5 integrator, here on GPU, with $\texttt{rtol}=\texttt{atol}$ set to the
    loosest of $\{10^{-4},10^{-6},10^{-8},10^{-10}\}$ that meets the accuracy target. Our implicit
    Euler scheme applies the resolvent $(\Id - \Delta t\,\LL)^{-1}$ (\cref{eq:resolvent}) with the Lyapunov preconditioner; in this regime a single preconditioned GMRES cycle per
    step suffices.

    The base system is the dense random Lindbladian ($\gamma = 0.1$) to which we add a diagonal quartic
    term $K\,N^2$, with $N = \mathrm{diag}(0,1,\dots,n-1)$, to the Hamiltonian as a knob on stiffness.
    We fix $n = 32$ and sweep $K \in [0, 15]$. The step size $\Delta t$ is chosen by geometric doubling
    as the largest step whose trajectory stays within fidelity $\mathcal F \ge 0.99$ of a reference
    Tsit5 run at $\texttt{rtol}=\texttt{atol}=10^{-12}$, evaluated at every step time. The implicit
    scheme uses a number of steps essentially independent of $K$, whereas the explicit integrator's
    step count grows with stiffness. Consequently, implicit Euler wins only once the system is stiff
    enough: Tsit5 is faster below the crossover ($K \approx 2.5$), while implicit Euler reaches a
$\sim\!34\times$ wall-clock speedup at $K = 15$. The advantage is specific to this low-precision
    regime; at tighter tolerances the required step size makes implicit Euler uncompetitive, which is
    why we present it as a proof of concept rather than a state-of-the-art solver.

    \paragraph{Resolvent (\cref{fig:benchmark-resolvent}).}
    Each method solves the vectorized shifted system $(\lambda - \LL)(\rho) = \Id/n$, at $\lambda = \lambda_{\max}\cdot 10^{\{0,-1,-2\}}$ with $\lambda_{\max}$ the
    largest-magnitude eigenvalue of $\LL$, estimated once off the clock. As the spectrum of $\LL$
    lies in the closed left half-plane, $\lambda\Id - \LL$ is nonsingular for $\lambda > 0$ and no
    trace-pinning is needed. Convergence is $\norm{(\lambda-\LL)(\rho) - \Id/n}_{\max} < 10^{-8}$ on
    the hermitized state rescaled to its exact trace $\Tr(\rho) = \Tr(b)/\lambda = 1/\lambda$.

    \emph{Dense $LU$} --- \texttt{numpy.linalg.solve} on the dense $n^2\times n^2$ matrix
$\lambda\Id - \LL$; direct, $O(n^6)$, CPU.

    \emph{MUMPS} --- multifrontal sparse $LU$ \cite{MUMPS:1,MUMPS:2} on the sparse $\lambda\Id - \LL$,
    default options; direct, CPU-only.

    \emph{GMRES} --- \texttt{scipy.sparse.linalg.gmres} on $\lambda\Id - \LL$, no preconditioner,
$\texttt{rtol}=0$, $\texttt{atol}=10^{-10}$, restarted every $100$ iterations, at most $5000$
    iterations; iterative, CPU.

    \emph{GMRES\,+\,ILU} --- the same GMRES right-preconditioned by an incomplete-$LU$ factorization
    (\texttt{scipy.sparse.linalg.spilu}) of $\lambda\Id - \LL$ with $\texttt{drop\_tol}=10^{-4}$ and
$\texttt{fill\_factor}=20$; the factorization (computed once) is part of the timed region.

    Our entry is the right-preconditioned recycled GMRES of \cref{sec:precond_def} (Krylov size $32$,
$8$ recycled vectors, $\texttt{atol}=10^{-9}$) with the eigendecomposition Lyapunov backend for
$\RR_\lambda^\SS$.

    \section{\texorpdfstring{$\Phi$}{Phi} is trace-preserving}
    \label{app:properties_phi}
    \begin{lemma}
    Assume $\vecspec(G)\cap i\mathbb{R} = \emptyset$. Then $\Phi=-\KK\SS^{-1}$ is trace-preserving.
    \end{lemma}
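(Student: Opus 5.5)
The plan is to use the trace identity $\Tr\LL(\sigma)=0$ for every $\sigma$, which splits as $\Tr\KK(\sigma) = -\Tr\SS(\sigma)$. This is the same observation already used in the proof of \cref{thm:contraction}. Under the assumption $\vecspec(G)\cap i\mathbb{R}=\emptyset$, \cref{thm:steady-state-eigenvalue} (second case) guarantees that $\SS$ is invertible on all of $\CC^{n\times n}$. So for an arbitrary operator $X$ we may set $\sigma \coloneq \SS^{-1}(X)$. Then
\begin{equation*}
    \Tr\Phi(X) = -\Tr\KK\bigl(\SS^{-1}X\bigr) = \Tr\SS\bigl(\SS^{-1}X\bigr) = \Tr X,
\end{equation*}
which is exactly trace preservation.

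The only ingredient to check is $\Tr\LL(\sigma)=0$ for arbitrary (not necessarily Hermitian) $\sigma$, with the split $\LL=\SS+\KK$ of \cref{eq:lindblad_decomposition}. I would verify this directly by cyclicity of the trace. On one side, $\Tr\KK(\sigma)=\sum_j\Tr(L_j^\dagger L_j\sigma)$. On the other, $\Tr\SS(\sigma)=\Tr((G+G^\dagger)\sigma)$, and $G+G^\dagger=-\sum_j L_j^\dagger L_j$ because the $-iH$ parts cancel by hermiticity of $H$. Equivalently, in the Heisenberg picture $\KK^*(\Id)=\sum_j L_j^\dagger L_j = -\SS^*(\Id)$.

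There is no real obstacle. The only point needing care is that invertibility of $\SS$ is what makes the substitution $\sigma=\SS^{-1}X$ legitimate, and that is exactly what the hypothesis provides.
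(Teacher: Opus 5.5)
Your argument is correct, and it takes a somewhat different route from the paper's.

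\textbf{The paper's route.} It works with the dual map. It writes $\Phi^*(\Id)=\int_0^\infty e^{tG^\dag}\bigl(\sum_j L_j^\dag L_j\bigr)e^{tG}\,dt$ using the integral representation of $-\SS^{-1}$. It then substitutes $\sum_j L_j^\dag L_j=-(G+G^\dag)$ and recognizes the integrand as $-\frac{d}{dt}\bigl(e^{tG^\dag}e^{tG}\bigr)$. Finally, it uses $\Re(\vecspec(G))<0$ to kill the boundary term at infinity.

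\textbf{Your route.} You use the same key identity, $\Tr\KK(\sigma)=-\Tr\SS(\sigma)$, i.e. $\KK^*(\Id)=-\SS^*(\Id)$. But you stay in the Schr\"odinger picture and replace the explicit integral by the abstract cancellation $\SS\SS^{-1}=\mathrm{id}$. In fact, the paper's fundamental-theorem-of-calculus step is just a concrete check that $(\SS^*)^{-1}\SS^*(\Id)=\Id$, with the integral playing the role of $-(\SS^*)^{-1}$.

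\textbf{What each buys.} Your version needs no analysis: no convergence of integrals and no decay of $e^{tG}$. It shows that trace preservation uses nothing about $\SS$ beyond invertibility and $\Tr\circ(\SS+\KK)=0$. It is exactly the $\lambda=0$ instance of the trace computation the paper itself performs in the proof of \cref{thm:contraction}. The paper's version, in exchange, makes visible where the spectral hypothesis enters, and it reuses the integral representation already invoked for complete positivity.

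\textbf{A small citation issue.} \Cref{thm:steady-state-eigenvalue} carries the standing hypothesis $\dim\ker\LL=1$, which this lemma does not assume, and its proof defers trace preservation to this very lemma. Nothing is actually circular, since the invertibility step in that proof uses neither the kernel assumption nor the lemma. It is still cleaner to argue directly. The spectrum of $G$ lies in the closed left half-plane, so the hypothesis gives $\Re(\vecspec(G))<0$. Hence $\vecspec(G)\cap\vecspec(-G^\dag)=\emptyset$, and $\SS$ is invertible by the Sylvester criterion, as already noted around \cref{eq:lyapunov_solution}.
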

    \begin{proof}
    We recall that a CP map is trace-preserving if and only if its dual map is unital, i.e. maps the identity to itself. The dual of $\Phi$ is characterized by the following relation, valid for all operators $X,Y$:
    \begin{align}
        \Tr{\Phi^*(X)Y} & = \Tr{X\Phi(Y)}                                                                  \\
                        & = \Tr{\left(X\int_0^\infty \sum_j L_j e^{tG}Ye^{tG^\dag}L_j^\dag\,dt \right)}    \\
                        & = \Tr{\left(\int_0^\infty \sum_j e^{tG^\dag} L_j^\dag X L_j e^{tG}\,dt Y\right)}
    \end{align}
    So that
    \begin{align}
        \Phi^*(\Id) & = \int_0^\infty e^{tG^\dag}\Big(\sum_i L_i^\dag L_i\Big)e^{tG}\,dt \\
                    & = \int_0^\infty -e^{tG^\dag}(G+G^\dag)e^{tG}\,dt                   \\
                    & = \Big[-e^{tG^\dag}e^{tG}\Big]_0^\infty = \Id,
    \end{align}
    where the boundary term at infinity vanishes because $\Re(\vecspec(G))<0$. Thus the dual is unital and $-\KK\SS^{-1}$ is CPTP.
\end{proof}

\bibliographystyle{plain}
\bibliography{mybib}

\end{document}